\newcommand{\figpath}{.}
\newcommand{\Tr}{\mathrm{Tr}}
\newcommand{\norm}[1]{\Vert #1 \Vert}
\newcommand{\abs}[1]{\vert #1 \vert}
\newcommand{\ket}[1]{\vert{ #1 }\rangle}
\newcommand{\ketbra}[2]{\vert #1 \rangle \langle #2 \vert}
\newcommand{\rket}[1]{\vert{ #1 }\rangle\rangle}
\newcommand{\rbra}[1]{\langle\langle{ #1 }\vert}
\newcommand{\rbraket}[2]{\langle\langle #1 \vert #2 \rangle\rangle}
\newcommand{\rketbra}[2]{\vert #1 \rangle\rangle \langle\langle #2 \vert}
\newtheorem{theorem}{Theorem}
\newtheorem{lemma}{Lemma}
\begin{document}

\title{Self-consistent tomography of temporally correlated errors}

\author{Mingxia Huo}

\affiliation{Beijing Computational Science Research Center, Beijing 100193, China}

\affiliation{Department of Physics and Beijing Key Laboratory for Magneto-Photoelectrical Composite and Interface Science, School of Mathematics and Physics, University of Science and Technology Beijing, Beijing 100083, China}

\author{Ying Li}

\affiliation{Graduate School of China Academy of Engineering Physics, Beijing 100193, China}

\begin{abstract}
The error model of a quantum computer is essential for optimizing quantum algorithms to minimize the impact of errors using quantum error correction or error mitigation. Noise with temporal correlations, e.g.~low-frequency noise and context-dependent noise, is common in quantum computation devices and sometimes even significant. However, conventional tomography methods have not been developed for obtaining an error model describing temporal correlations. In this paper, we propose self-consistent tomography protocols to obtain a model of temporally correlated errors, and we demonstrate that our protocols are efficient for low-frequency noise and context-dependent noise. 
\end{abstract}

\maketitle

\section{Introduction}

How to correct errors is one of the most critical issues in practical quantum computation. In the theory of quantum fault tolerance based on quantum error correction (QEC), an arbitrarily high-fidelity quantum computation can be achieved, providing sub-threshold error rates and sufficient qubits~\cite{Nielsen2010}. Recently, error rates within or close to the sub-threshold regime have been demonstrated in various platforms~\cite{Barends2014, Rong2015, Lucas, Wineland, BlumeKohout2017}. These error rates are measured using either randomized benchmarking (RB)~\cite{Emerson2005, Knill2008, Magesan2011, Wallman2014, Fogarty2015, Ball2016, Mavadia2018} or quantum process tomography (QPT)~\cite{Poyatos1997, Chuang1997}. RB only estimates an average effect of the noise, and QPT can provide a model of error channels. Rigorously speaking, whether or not a quantum system is in the sub-threshold regime is not only determined by the error rate but also the detailed error model~\cite{Wang2011, Kueng2016}, including correlations between errors~\cite{Aharonov1999}. Therefore, an error model describing correlated errors is important for verifying sub-threshold quantum devices. We can also optimize QEC protocols by exploring these correlations~\cite{Wang2011, Huo2017}, which is crucial for the early-stage demonstration of small-scale quantum fault tolerance. Given the limited number of qubits and error rate close to the threshold, we need to carefully choose the protocol to observe any advantage of QEC~\cite{Chiaverini2004, Schindler2011, Nigg2014, Taminiau2014, Corcoles2015, Riste2015, Muller2016, Linke2017, Bermudez2017}. 

We may still need many years to realise a fault-tolerant quantum computer~\cite{Fowler2012, Joe2017}, however noisy intermediate-scale quantum (NISQ) computers are likely to be developed in the near future~\cite{Preskill2018, Boixo2016, Neill2018}. Quantum error mitigation is an alternative approach to high-fidelity quantum computation~\cite{Li2017, Temme2017, Endo2017, Kandala2018}, which does not require encoding, therefore, is more promising than QEC on NISQ devices. In quantum error mitigation using the error extrapolation, we can increase errors to learn their effect on the observable representing the computation result. Once we know how the observable changes with the level of errors, we can make an extrapolated estimate of the zero-error computation result. This extrapolation can be implemented directly on the final result or each gate using the quasi-probability decomposition formalism. The effect of errors depends on the error model. Therefore, we need to increase errors according to the model of original errors in the system, i.e.~at first we need a proper error model of original errors. The error model can be obtained using gate set tomography (GST)~\cite{Merkel2013, BlumeKohout2013, Stark2014, Greenbaum2015, BlumeKohout2017, Sugiyama2018}, a self-consistent QPT protocol. With the self-consistent error model, the effect of errors on the computation result can be eliminated, under the condition that errors are not correlated~\cite{Endo2017}. However, correlations are common in quantum systems~\cite{Hooge1981, Paik2011, Sank2012}, e.g.~the slow drift of laser frequency can cause time-dependent gate fidelity in ion trap systems~\cite{Rutman1978, Wineland1998, SchmidtKaler2003, Benhelm2008, Ballance2016}, which limits the fidelity of quantum computation on NISQ devices. Neither RB nor conventional QPT can provide an error model describing temporal correlations~\cite{Wallman2014, Fogarty2015, Ball2016, BlumeKohout2017, Mavadia2018}. 

In this paper, we propose self-consistent tomography protocols to obtain the model of temporally correlated errors without using any additional operations accessing the environment. Temporal correlations are caused by the correlations between the system and the environment. Without a set of informationally-complete state preparation and measurement operations, we cannot implement conventional QPT on the environment. We find that quantum gates are fully characterized by a set of linear operators acting on a subspace of Hermitian matrices, which can be measured in the experiment only using themselves even without information completeness. However, these operators may not be complete completely positive (CP) maps as in conventional QPT. We term our method as linear operator tomography (LOT), which can be used to reconstruct an operator representation of quantum gates without using additional operations on the environment. A tremendous amount of experimental data may be required to obtain the exact model of temporally correlated errors. For the practical implementation, we aim at an approximate error model, and we find that efficient approximations for low-frequency time-dependent noise and context-dependent noise exist~\cite{Rudinger2018, Veitia2018}. Practical protocols are proposed and demonstrated numerically. Error rates estimated using RB and GST may exhibit significant difference due to temporal correlations~\cite{BlumeKohout2017}. In numerical simulations, we show that RB and LOT results coincide with each other even in the presence of temporal correlations. 

The paper is organized as follows. In Sec.~\ref{sec:model}, we first introduce a general model of a quantum computer including the environment, wherein temporal correlations are caused by the environment. In Sec.~\ref{sec:exact}, we show that in principle by only using operations for operating the system, we can reconstruct a self-consistent model of both the system and the environment. The exact LOT protocol is introduced in Sec.~\ref{sec:exact}. In Sec.~\ref{sec:truncation}, we present the condition for performing a truncation on the system-environment state space. In Sec.~\ref{sec:app_model}, we discuss low-frequency time-dependent noise and context-dependent noise. We show that a low-dimensional state space can characterize these two types of temporally correlations. In Sec.~\ref{sec:appQT}, we give two approximate tomography protocols for the practical implementation. In Sec.~\ref{sec:simulation}, we demonstrate the protocol in numerical simulations. 

\section{The model of a quantum computer}
\label{sec:model}

\begin{figure}[tbp]
\centering
\includegraphics[width=1\linewidth]{\figpath /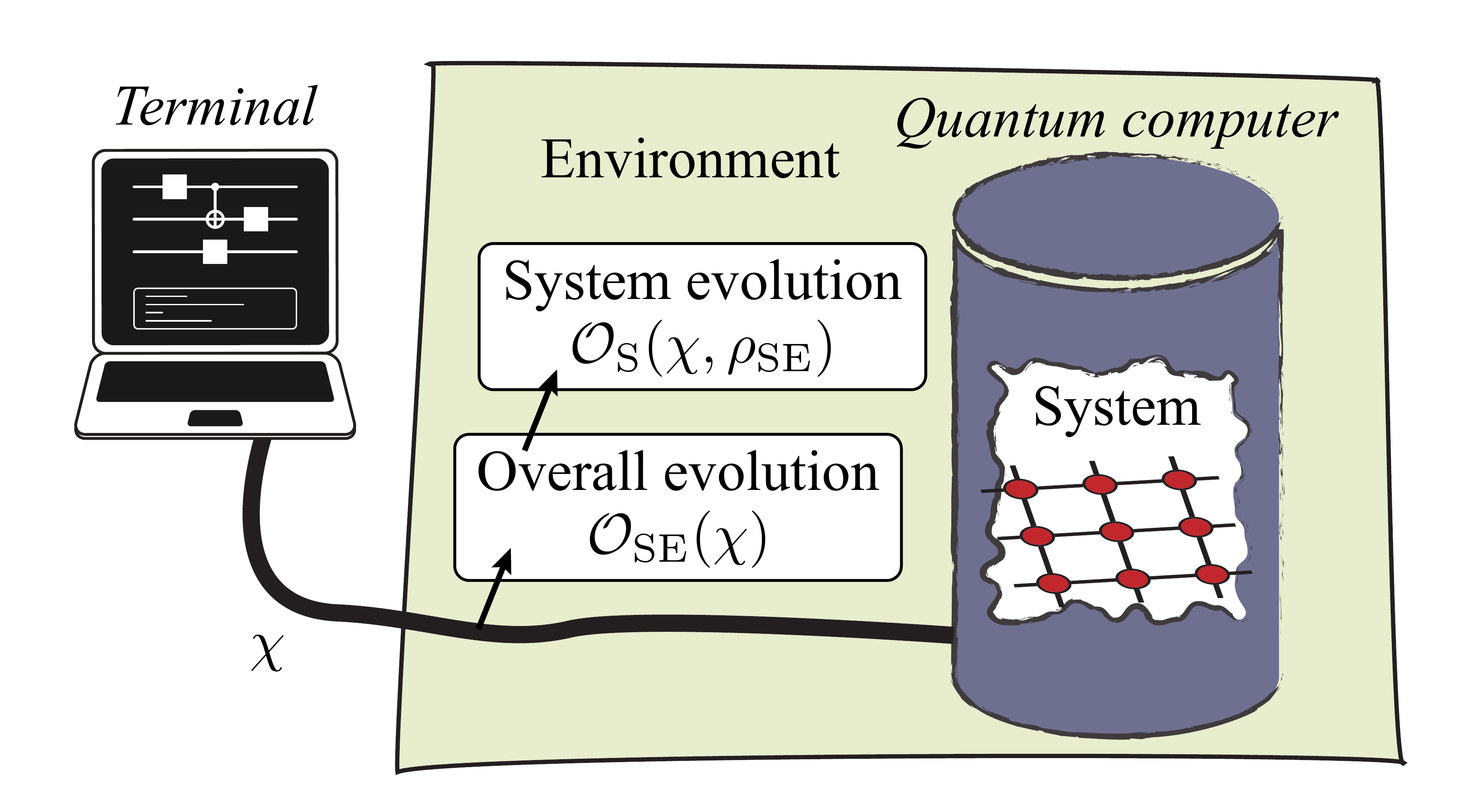}
\caption{
Quantum computer controlled by the state of a terminal. The state of the terminal $\chi$ results in the evolution $\mathcal{O}_{\rm SE}(\chi)$ of the system and environment. The evolution of the system $\mathcal{O}_{\rm S}(\chi, \rho_{\rm SE})$ depends on both the terminal state $\chi$ and the state of the system and environment at the beginning of the evolution $\rho_{\rm SE}$. 
}
\label{fig:model}
\end{figure}

For illustrating how to describe errors with temporal correlations, we start with an example in ion trap systems. If the quantum gate is driven by the laser field, usually the gate fidelity depends on the laser frequency $\lambda$~\cite{Rutman1978, Wineland1998, SchmidtKaler2003, Benhelm2008, Ballance2016}. The laser frequency drifts with time, and usually we are not aware of its instant value. We use $\lambda$ to denote a time-dependent random variable, such as the laser frequency. The stochastic process of $\lambda$ is characterized by the probability distribution $\bar{p}(f)$, i.e.~the value of $\lambda$ at time $t$ is $f(t)$, and $\bar{p}(f)$ is the probability density in the space of functions $f(t)$. We use the superoperator $\mathcal{O}_{\rm S}(\lambda)$ to denote the gate operation given by the laser frequency $\lambda$. With the initial state $\rho_{\rm S}$, the output state after two gates $\mathcal{O}_{\rm S}$ and $\mathcal{O}_{\rm S}'$ at $t$ and $t'$, respectively, reads $\rho_{\rm S}^{(2)} = \int df \bar{p}(f) \mathcal{O}_{\rm S}'(f(t')) \mathcal{O}_{\rm S}(f(t))(\rho_{\rm S})$. We can find that the state $\rho_{\rm S}^{(2)}$ cannot be factorized as two independent operations on the initial state. Therefore, conventional QPT cannot be applied~\cite{Poyatos1997, Chuang1997, Merkel2013, BlumeKohout2013, Stark2014, Greenbaum2015, BlumeKohout2017, Sugiyama2018}. 

For simplification, we assume that $\lambda$ changes slowly with time, and the typical time that $\lambda$ changes is much longer than the time scale of a quantum circuit, i.e.~the time from the state preparation to the measurement. In this case, we neglect the change of $\lambda$ within each run of the quantum circuit, i.e.~$f(t') = f(t)$ for two gates in the same run. We also assume that the distribution is stationary, then the distribution of the instant value, i.e.~$p(\lambda) = \int df \bar{p}(f) \delta(\lambda-f(t))$, is time-independent. With the distribution of the instant value, we can rewrite the output state in the form $\rho_{\rm S}^{(2)} = \int d\lambda p(\lambda) \mathcal{O}_{\rm S}'(\lambda) \mathcal{O}_{\rm S}(\lambda)(\rho_{\rm S})$. 

We can factorize multi-gate superoperators by introducing the state space of the laser frequency. We use $\ket{\lambda}_{\rm E}$ to denote the state corresponding to the laser frequency $\lambda$. The state space of $\ket{\lambda}_{\rm E}$ can be virtual, i.e.~it is not necessary that $\ket{\lambda}_{\rm E}$ corresponds to a pure state in a physical Hilbert space. The initial state of qubits and the laser frequency can be expressed as $\rho_{\rm SE} = \int d\lambda p(\lambda) \rho_{\rm S} \otimes \ketbra{\lambda}{\lambda}_{\rm E}$. Then the superoperator of a laser-frequency-dependent gate can be expressed as $\mathcal{O}_{\rm SE} = \int d\lambda \mathcal{O}_{\rm S}(\lambda) \otimes [\ketbra{\lambda}{\lambda}_{\rm E}]$, where $[U]$ denotes a superoperator $[U](\rho) = U \rho U^\dag$. After two gates, the state of the system and the laser frequency reads $\rho_{\rm SE}^{(2)} = \mathcal{O}_{\rm SE}' \mathcal{O}_{\rm SE} (\rho_{\rm SE})$, which is in the factorized form. One can find that $\rho_{\rm S}^{(2)} = \Tr_{\rm E} (\rho_{\rm SE}^{(2)})$. 

We note that multi-gate superoperators can be factorized following a similar procedure for noise with any spectrum, i.e.~the change of the variable $\lambda$ in the time scale of a quantum circuit can be nonnegligible or even significant, as we will show in Sec.~\ref{sec:low-frequency}. It is straightforward to generalise the approach to the case of multiple random variables, e.g.~the gate fidelity depends on a set of drifting laser parameters, and the case that the initial state of the system also depends on random variables. 

By introducing the environment, which is the frequency space in the example, we can describe temporally correlated errors. Such a formalism has been used in an ion-trap tomography experiment~\cite{BlumeKohout2017}, in which a classical bit is introduced to represent the environment memory. With one classical bit and one qubit, the tomography is implemented on an eight-dimensional state space. 

Now, we introduce a general model of quantum computer. For a quantum computer with $n$ qubits, we call the $2^n$-dimensional Hilbert space of qubits the system. Degrees of freedom coupled to the system form the environment, including but not limited to all random variables determining the evolution of the system. Quantum computation is realised by a sequence of quantum gates. The gate sequence is stored in a terminal, e.g.~a classical computer, and the evolution of the system-environment (SE) is controlled by the terminal as shown in Fig.~\ref{fig:model}. We use $\chi$ to denote the state of the terminal indicating ``Implement the gate $\chi$'' and superoperator $\mathcal{O}_{\rm SE}(\chi)$ to denote the corresponding evolution of SE. Here $\chi$ is a deterministic parameter rather than a random variable. By setting $\chi$ according to the gate sequence, we realise the quantum computation. We assume that the Born-Markov approximation can be applied to the terminal and SE, and operations on SE are Markovian and factorized. For the gate sequence $\chi_1,\chi_2,\ldots,\chi_N$, the overall evolution of SE is $\mathcal{O}_{\rm SE}(\chi_N)\cdots\mathcal{O}_{\rm SE}(\chi_2)\mathcal{O}_{\rm SE}(\chi_1)$. 

In this model, the time dependence for operations on the system are not expressed explicitly. Operations on SE are time-independent. However, corresponding system operations are stochastic and depend on the environment state. When the environment state evolves, which is driven by SE operations, system operations evolves accordingly. In Sec.~\ref{sec:low-frequency}, we will give an example that SE operations drive the stochastic process of the environment. In this way, we can describe errors with general temporal correlations, not only correlations caused by classical random variables such as laser frequencies, but also correlations caused by the coupling to a quantum system in the environment. 

In general, the evolution of the system $\mathcal{O}_{\rm S}(\chi, \rho_{\rm SE})$ depends on not only $\chi$ but also the state of SE at the beginning of the evolution $\rho_{\rm SE}$. If the system and environment are correlated in $\rho_{\rm SE}$, the system evolution may not even be CP~\cite{Pechukas1994}. If the system evolution $\mathcal{O}_{\rm S}(\chi, \rho_{\rm SE}) = \mathcal{O}_{\rm S}(\chi)$ does not depend on $\rho_{\rm SE}$, the overall system evolution of a gate sequence is $\mathcal{O}_{\rm S}(\chi_N)\cdots\mathcal{O}_{\rm S}(\chi_2)\mathcal{O}_{\rm S}(\chi_1)$. In this case, conventional QPT can be applied, we can obtain $\mathcal{O}_{\rm S}(\chi)$ up to a similarity transformation using GST~\cite{Merkel2013, BlumeKohout2013, Stark2014, Greenbaum2015, BlumeKohout2017, Sugiyama2018}, and the computation error can be mitigated as proposed in Ref.~\cite{Endo2017}. By introducing the environment, non-Markovian processes can be reconstructed using quantum tomography~\cite{Pollock2018}. 

From now on, we focus on states and operations of SE and neglect the subscript `SE'. All states, operations and observables without a subscript (`SE', `S' or `E') correspond to SE; and subscripts `S' and `E' are used to denote the system and the environment, respectively. 

We would like to remark that LOT protocols proposed in this paper cannot reconstruct the complete CP maps acting on SE as in conventional QPT protocols. In LOT, we only use the operations designed to operate the system, i.e.~quantum gates for the computation, which actually act on SE because of imperfections. Given the limited accessibility to the environment, it is unrealistic to implement informationally-complete state preparation and measurement for the full tomography of SE. 

\subsection*{State, measurement, operations and Pauli transfer matrix representation}

A quantum computer is characterized by a set of linear operators: the initial state $\rho_{\rm in}$ which is a normalized positive Hermitian operator, the measurement (i.e.~measured observable) $Q_{\rm out}$ which is also a Hermitian operator, and a set of elementary computation operations $\{ \mathcal{O}(\chi) \}$ which are CP maps. We remark that, $\rho_{\rm in}$, $Q_{\rm out}$ and $\{ \mathcal{O}(\chi) \}$ describe the actual quantum computer (including both the system and environment) rather than an ideal quantum computer, and they are all unknown therefore need to be investigated in tomography. The quantum computation is realised by a sequence of elementary operations on the initial state. The set of operation sequences $O = \{ \mathcal{O}(\chi_N) \cdots \mathcal{O}(\chi_2) \mathcal{O}(\chi_1) \}$ includes all operations generated by elementary operations $\{ \mathcal{O}(\chi) \}$. 

We focus on the case that the quantum computer only provides one option of the initial state $\rho_{\rm in}$ and one option of the observable to be measured $Q_{\rm out}$. It is straightforward to generalize our results to the case that multiple options of initial states and observables are available. 

In this paper, we use Pauli transfer matrix representation~\cite{Merkel2013, BlumeKohout2013, Stark2014, Greenbaum2015, BlumeKohout2017, Sugiyama2018}: $\rket{\rho}$ is a column vector with elements $\rket{\rho}_\sigma = \Tr(\sigma\rho)$; $\rbra{Q}$ is a row vector with elements $\rbra{Q}_\sigma = d_{\rm H}^{-1} \Tr(Q\sigma)$; then an quantum operation $\mathcal{O}$ can be expressed as a square matrix with elements $\mathcal{O}_{\sigma,\tau} = d_{\rm H}^{-1} \Tr[\sigma\mathcal{O}(\tau)]$. Here, $\sigma$ and $\tau$ are Pauli operators or generalized Pauli operators, i.e.~Hermitian operators satisfying $\Tr(\sigma\tau) = d_{\rm H} \delta_{\sigma,\tau}$, and $d_{\rm H}$ is the dimension of the Hilbert space of SE. All these vectors and matrices are real and $d_{\rm H}^2$-dimensional. For a state $\rho$ and an observable $Q$, $\rbraket{Q}{\rho} = \Tr(Q\rho)$ is the mean of the observable $Q$ in the state $\rho$. For an operation $\mathcal{O}$, $\rket{\mathcal{O}(\rho)} = \mathcal{O}\rket{\rho}$ is the vector corresponding to the state $\mathcal{O}(\rho)$. Therefore, the mean of the observable $Q$ in the state $\rho$ after a sequence of quantum operations reads $\Tr[Q\mathcal{O}_N\cdots\mathcal{O}_2\mathcal{O}_1(\rho)] = \rbra{Q}\mathcal{O}_N\cdots\mathcal{O}_2\mathcal{O}_1\rket{\rho}$. 

\section{Self-consistent tomography without information completeness}
\label{sec:exact}

Information completeness is required by conventional QPT protocols. If we can prepare a complete set of states $\{ \rket{\rho_i} = \mathcal{O}_i \rket{\rho_{\rm in}} \}$ and measure a complete set of observables $\{ \rbra{Q_k} = \rbra{Q_{\rm out}} \mathcal{O}_k' \}$, i.e.~$d_{\rm H}^2$ linearly independent vectors in each set, we can implement QPT on SE to reconstruct the CP maps of quantum gates. Here, the CP maps act on SE. However, information completeness for SE is unrealistic. 

In this section, we demonstrate that it is possible to exactly characterize a set of quantum gates using tomography without information completeness. In the LOT formalism, we obtain a set of operators acting on a subspace of Hermitian matrices to represent quantum gates, which may not be complete CP maps without information completeness. Such an operator representation is adequate in the sense that given the initial state, an arbitrary operation sequence and the observable to be measured, the average value of the observable can be computed using these operators. 

Because information completeness is not required, we can use LOT to characterize the quantum computer even in the presence of temporal correlations. In this section, the feasibility is not our concern. In the following sections, we will discuss how to adapt the protocol for the purpose of practical implementation. 

\subsection*{Linear operator tomography}

With only computation operations, which are designed to operate the system but act on SE because of imperfections, usually we do not have complete state and observable sets, therefore, we cannot access the entire space of Hermitian matrices. 

Given an initial state $\rho_{\rm in}$, an observable $Q_{\rm out}$ and a set of elementary operations $\{ \mathcal{O}(\chi) \}$, we consider three subspaces of Hermitian matrices as follows. The subspace $V_{\rm in} = {\rm span} ( \{ \mathcal{O} \rket{\rho_{\rm in}} ~\vert~ \mathcal{O} \in O \} )$ is the span of all states that can be prepared in the quantum computer, and the subspace $V_{\rm out} = {\rm span} ( \{ \rbra{Q_{\rm out}} \mathcal{O} ~\vert~ \mathcal{O} \in O \} )$ is the span of all observables that can be effectively measured. Note that $O$ is the set of all operations generated by elementary operations. We use $P_{\rm in}$ and $P_{\rm out}$ to denote the orthogonal projections on $V_{\rm in}$ and $V_{\rm out}$, respectively. The third subspace is $V = {\rm span} ( \{ P_{\rm out} \mathcal{O} \rket{\rho_{\rm in}} ~\vert~ \mathcal{O} \in O \} )$, and we use $P$ to denote the orthogonal projection on $V$. 

The subspace $V$ is the space of Hermitian matrices that the finite set of operations can access to. If $V$ is the entire Hermitian-matrix space with the dimension $d_{\rm H}^2$, states and observables are complete, then LOT is the same as GST. In general, the completeness is not required in LOT. 

Our first result is that in order to fully characterize the quantum computer, we only need to reconstruct $P \rket{\rho_{\rm in}}$, $\rbra{Q_{\rm out}} P$ and $\{ P \mathcal{O}(\chi) P \}$ in the tomography. The reason is that, for an arbitrary sequence of operations in $O$, we have
\begin{eqnarray}
&& \rbra{Q_{\rm out}} \mathcal{O}_N \cdots \mathcal{O}_2 \mathcal{O}_1 \rket{\rho_{\rm in}} \notag \\
&=& \rbra{Q_{\rm out}} P \mathcal{O}_N P \cdots P \mathcal{O}_2 P \mathcal{O}_1 P \rket{\rho_{\rm in}}.
\end{eqnarray}
See Appendix~\ref{app:exact} for the proof. We would like to remark that the conclusion is the same for the subspace ${\rm span} ( \{ \rbra{Q_{\rm out}} \mathcal{O} P_{\rm in} ~\vert~ \mathcal{O} \in O \} )$. 

With this result, we can perform the tomography in a similar way to GST. We note that the protocol presented in this section is for the purpose of illustrating the self-consistent formalism rather than the practical implementation, and we discuss the practical implementation later. We need to assume that the dimension of the subspace $V$ is finite and known, see discussions at the end of this section. The dimension of $V$ is $d_V = \Tr(P)$. The exact LOT protocol is as follows: 
\begin{itemize}
\item[$\bullet$] Choose a set of states $\{ \rket{\rho_i} = \mathcal{O}_i \rket{\rho_{\rm in}} \}$ and a set of observables $\{ \rbra{Q_k} = \rbra{Q_{\rm out}} \mathcal{O}_k' \}$. Here, $\mathcal{O}_i, \mathcal{O}_k' \in O$, $i,k = 1,\cdots,d$ and we take $d = d_V$. We always take $\rket{\rho_1} = \rket{\rho_{\rm in}}$ and $\rbra{Q_1} = \rbra{Q_{\rm out}}$. 
\item[] These states and observables must satisfy the condition that $\{ P \rket{\rho_i} \}$ and $\{ \rbra{Q_k} P \}$ are both linearly independent. According to the definition of the subspace $V$, states and observables satisfying the condition always exist and can be realised in the quantum computer using the combination of elementary operations. 
\item[$\bullet$] Obtain matrices $g = M_{\rm out} M_{\rm in}$ and $\widetilde{\mathcal{O}}(\chi) = M_{\rm out} \mathcal{O}(\chi) M_{\rm in}$ for each $\chi$ in the experiment. Here, $M_{\rm in} = [ \; \rket{\rho_1} \;\cdots\; \rket{\rho_d} \; ]$ is the matrix with $\{ \rket{\rho_i} \}$ as columns, and $M_{\rm out} = [ \; \rbra{Q_1}^{\rm T} \;\cdots\; \rbra{Q_d}^{\rm T} \; ]^{\rm T}$ is the matrix with $\{ \rbra{Q_k} \}$ as rows. Each matrix element can be measured in the experiment. The element $g_{k,i} = \rbraket{Q_k}{\rho_i}$ is the mean of $Q_k$ in the state $\rho_i$. The element $\widetilde{\mathcal{O}}_{k,i}(\chi) =  \rbra{Q_k} \mathcal{O}(\chi) \rket{\rho_i}$ is the mean of $Q_k$ in the state $\rho_i$ after the operation $\mathcal{O}(\chi)$. 
\item[] When $\{ P \rket{\rho_i} \}$ and $\{ \rbra{Q_k} P \}$ linearly independent, $g$ is invertible. 
\end{itemize}

Data $g$ and $\{ \widetilde{\mathcal{O}}(\chi) \}$ exactly characterize the quantum computer. We have
\begin{eqnarray}
M_{\rm out} \mathcal{O}_N \cdots \mathcal{O}_2 \mathcal{O}_1 M_{\rm in} = \widetilde{\mathcal{O}}_N g^{-1} \cdots \widetilde{\mathcal{O}}_2 g^{-1} \widetilde{\mathcal{O}}_1
\label{eq:main_GST}
\end{eqnarray}
for an arbitrary sequence of operations in $O$. See Appendix~\ref{app:exact} for the proof. 

Given $g$ and $\{ \widetilde{\mathcal{O}}(\chi) \}$, we can obtain an exact error model of the quantum computer.
\begin{itemize}
\item[$\bullet$] Choose a $d$-dimensional invertible real matrix $\widehat{M}_{\rm in}$, and compute $\widehat{M}_{\rm out} = g\widehat{M}_{\rm in}^{-1}$.
\item[$\bullet$] Take $\rket{\widehat{\rho}_{\rm in}} = \widehat{M}_{\rm in; \bullet,1}$ and $\rbra{\widehat{Q}_{\rm out}} = \widehat{M}_{\rm out; 1,\bullet}$, and compute $\widehat{\mathcal{O}}(\chi) = \widehat{M}_{\rm out}^{-1} \widetilde{\mathcal{O}}(\chi) \widehat{M}_{\rm in}^{-1}$ for each $\chi$.
\end{itemize}
Here, $M_{\bullet,i}$ and $M_{k,\bullet}$ denote the $i^{\rm th}$ column and the $k^{\rm th}$ row of the matrix $M$, respectively. The error model of the quantum computer is formed by $\rket{\widehat{\rho}_{\rm in}}$, $\rbra{\widehat{Q}_{\rm out}}$ and $\{ \widehat{\mathcal{O}}(\chi) \}$, which correspond to $\rket{\rho}$, $\rbra{Q}$ and $\{ \mathcal{O}(\chi) \}$, respectively. 

According to Eq.~(\ref{eq:main_GST}), we have
\begin{eqnarray}
&&\rbra{\widehat{Q}_{\rm out}} \widehat{\mathcal{O}}(\chi_N) \cdots \widehat{\mathcal{O}}(\chi_1) \rket{\widehat{\rho}_{\rm in}} \notag \\
&=& \rbra{Q_{\rm out}} \mathcal{O}(\chi_N) \cdots \mathcal{O}(\chi_1) \rket{\rho_{\rm in}}.
\end{eqnarray}
The first line is the computation result according to the error model (the tomography result), and the second line is the experimental result of the actual quantum computer, which are equal. In this sense the error model is exact. The exactness of the error model does not rely on how to choose the matrix $\widehat{M}_{\rm in}$. If we choose a different matrix $\widehat{M}_{\rm in}'$, then we can obtain another error model $\rket{\widehat{\rho}_{\rm in}'} = S\rket{\widehat{\rho}_{\rm in}}$, $\rbra{\widehat{Q}_{\rm out}'} = \rbra{\widehat{Q}_{\rm out}}S^{-1}$ and $\{ \widehat{\mathcal{O}}'(\chi) = S\widehat{\mathcal{O}}(\chi)S^{-1} \}$, where $S = \widehat{M}_{\rm in}'\widehat{M}_{\rm in}^{-1}$. Both error models can exactly characterize the quantum computer, because the difference between two error models is only a similarity transformation~\cite{Merkel2013, BlumeKohout2013, Stark2014, Greenbaum2015, BlumeKohout2017, Sugiyama2018, Lin2019}. 

If states and observables are informationally complete for SE, $V$ is the entire Hermitian-matrix space of SE, and LOT is the same as GST applied on SE. Without temporal correlations, the states and observables are usually informationally complete for the system, i.e.~$V$ is the entire Hermitian-matrix space of the system, and LOT is the same as GST applied on the system. In general, $V$ is neither the entire space of SE nor the entire space of the system, in which case LOT is different from GST. 

In the exact LOT protocol introduced in this section, we have assumed that the dimension of the subspace $V$ is known. If we can collect all the data generated by the operation set $O$, we can find out the dimension of $V$ by analysing the number of linearly independent states (in the subspace $V_{\rm out}$). In Sec.~\ref{sec:appQT}, we introduce approximate LOT protocols for the practical implementation, in which we do not need to assume that the dimension of $V$ is known. 

\section{Space dimension truncation}
\label{sec:truncation}

Usually, the environment is a high-dimensional Hilbert space. Although only the subspace $V$ is relevant in the exact LOT, its dimension could still be too high to allow the exact LOT to be implemented. Therefore, a practical LOT protocol is approximate and requires that a low-dimensional subspace approximately characterize the quantum computer. Here, we give a sufficient condition for the existence of such a subspace.

We consider the case that $d < d_V$, i.e.~states $\{ \rket{\rho_i} ~\vert~ i = 1,\ldots,d \}$ and observables $\{ \rbra{Q_k} ~\vert~ k = 1,\ldots,d  \}$ are not sufficient for implementing the exact LOT. We define a quantum computer to be approximately characterized by $\Pi_{\rm in} M_{\rm in}$, $M_{\rm out} \Pi_{\rm in}$ and $\{ \Pi_{\rm in} \mathcal{O}(\chi) \Pi_{\rm in} \}$ if the subspace spanned by $\{ \rket{\rho_i}\}$ is approximately invariant under operations $\{ \mathcal{O}(\chi) \}$. Here, $\Pi_{\rm in}$ and $\Pi_{\rm out}$ are orthogonal projections on subspaces ${\rm span} ( \{ \rket{\rho_i} \} )$ and ${\rm span} ( \{ \rbra{Q_k} \} )$, respectively.

If $\norm{\rket{\rho_i}} \leq N_{\rho}$, $\norm{\rbra{Q_k}} \leq N_Q$, $\norm{\mathcal{O}(\chi)} \leq N_\mathcal{O}$, and $\norm{ \Pi_{\rm in} \mathcal{O}(\chi) \Pi_{\rm in} - \mathcal{O}(\chi) \Pi_{\rm in} } \leq \epsilon$~\cite{footnote}, we have
\begin{eqnarray}
&& \Vert M_{\rm out} \mathcal{O}(\chi_N) \cdots \mathcal{O}(\chi_1) M_{\rm in} \notag \\
&& - M_{\rm out} \Pi_{\rm in}\mathcal{O}(\chi_N)\Pi_{\rm in} \cdots \Pi_{\rm in}\mathcal{O}(\chi_1)\Pi_{\rm in} M_{\rm in} \Vert_{\rm max} \notag \\
&\leq & N_Q N_\rho \left[ \left(N_\mathcal{O} + \epsilon\right)^N - N_\mathcal{O}^N \right] \sim N_Q N_\rho N_\mathcal{O}^{N-1} \times N\epsilon ~~
\end{eqnarray}
for an arbitrary sequence of elementary operations. See Appendix~\ref{app:truncation} for the proof. Here, we always have $N_\mathcal{O} = 1$ by taking the trace norm. A small $\epsilon$ means that the subspace is approximately invariant under elementary operations, in which case an approximate tomography is possible.

There are various ways to find out the truncated dimension $d < d_V$. For low-frequency noise and context-dependent noise, the dimension can be determined by the prior knowledge about the noise, as we show in Sec.~\ref{sec:app_model}. We can validate the truncation by computing the spectrum of the Gram matrix $g^{\rm t}$, see Sec.~\ref{sec:LIM}. Similarly, for operations with temporal correlations, the number of eigenvalues in the spectrum of an operation is more than $d_{\rm S}^2$, where $d_{\rm S}$ is the dimension of the Hilbert space of the system. The spectrum of an operation can be measured using the spectral quantum tomography~\cite{Helsen2019}, which can also be used to determine the truncated dimension. 

\section{Approximate models of temporally correlated errors}
\label{sec:app_model}

The practical use of LOT requires that a low-dimensional approximate model exists. In this section, we consider two typical sources of temporal correlations, i.e.~low-frequency noise and context-dependent noise. For both of them, low-dimensional approximate models exist.

\subsection{Low-frequency noise and classical random variables}
\label{sec:low-frequency}

A typical source of temporally correlated errors in laboratory systems is the stochastic variation of classical parameters as discussed in Sec.~\ref{sec:model}. For instances, in the trapped ion system, drifts of laser parameters cause time dependent coherent error~\cite{Wineland1998}; in the superconducting qubit system, fluctuations in the quasiparticle population lead to temporal variations in the qubit decay rate~\cite{Gustavsson2016}. If the correlation time of the noise is negligible compared with the time of a quantum gate, temporal correlations in gate errors are insignificant. However, if the correlation time is comparable or even longer than the gate time, errors are correlated, i.e.~a sequence of quantum gates cannot be factorized because of low-frequency noise. We show that errors with this kind of correlations can be efficiently approximated using a low-dimensional model if moments of the parameter distribution converge rapidly. We remark that LOT is not limited to classical correlations and can be applied to general cases as long as the dimension truncation is valid. 

As the same as in Sec.~\ref{sec:model}, We can use
\begin{eqnarray}
\rho = \int d\lambda p(\lambda) \rho_{\rm S}(\lambda) \otimes \ketbra{\lambda}{\lambda}_{\rm E}
\label{eq:LF_rho}
\end{eqnarray}
to describe a state that depends on random variables $\lambda$. Here, $\lambda$ is an array with $n$ elements that respectively denote $n$ variables, and $\rho_{\rm S}(\lambda)$ is the state of the system when variables are $\lambda$. An operation depending on random variables reads
\begin{eqnarray}
\mathcal{O}(\chi) = \mathcal{T}(\chi) \int d\lambda \mathcal{O}_{\rm S}(\chi,\lambda) \otimes [\ketbra{\lambda}{\lambda}_{\rm E}],
\label{eq:LF_O}
\end{eqnarray}
where $\mathcal{O}_{\rm S}(\chi,\lambda)$ is the operation on the system when variables are $\lambda$. Compared with the expression in Sec.~\ref{sec:model}, there is an additional operation $\mathcal{T}(\chi)$ in $\mathcal{O}(\chi)$, which describes the stochastic evolution of variables in the time of the operation. Here $\mathcal{T}(\chi) = [\openone_{\rm S}] \otimes \mathcal{T}_{\rm E}(\chi)$, $\mathcal{T}_{\rm E}(\chi) = \int d\lambda' d\lambda t_{\lambda',\lambda}(\chi) [\ketbra{\lambda'}{\lambda}_{\rm E}]$, $t_{\lambda',\lambda}(\chi)$ is the transition probability density from $\lambda$ to $\lambda'$, and $\openone$ is the identity operator. Similarly, an observable depending on random variables reads
\begin{eqnarray}
Q = \int d\lambda Q_{\rm S}(\lambda) \otimes \ketbra{\lambda}{\lambda}_{\rm E},
\label{eq:LF_Q}
\end{eqnarray}
where $Q_{\rm S}(\lambda)$ is the observable of the system when variables are $\lambda$. 

The approximate model is given by
\begin{eqnarray}
\rho^{\rm a} &=& \sum_{\lambda \in L} p^{\rm a}(\lambda) \rho_{\rm S}(\lambda) \otimes \ketbra{\lambda}{\lambda}^{\rm a}_{\rm E}, \label{eq:LF_rho_a} \\
\mathcal{O}^{\rm a}(\chi) &=& \mathcal{T}^{\rm a}(\chi) \sum_{\lambda \in L} \mathcal{O}_{\rm S}(\chi,\lambda) \otimes [\ketbra{\lambda}{\lambda}^{\rm a}_{\rm E}], \label{eq:LF_O_a} \\
Q^{\rm a} &=& \sum_{\lambda \in L} Q_{\rm S}(\lambda) \otimes \ketbra{\lambda}{\lambda}^{\rm a}_{\rm E}. \label{eq:LF_Q_a}
\end{eqnarray}
Here, $L$ is a finite subset of random variables. If $\lambda$ takes $m$ values in $L$, i.e.~$\abs{L} = m$, the environment in the approximate model is $m$-dimensional. The transition operation in the approximate model is $\mathcal{T}^{\rm a}(\chi) = [\openone_{\rm S}] \otimes \mathcal{T}^{\rm a}_{\rm E}(\chi)$, where $\mathcal{T}^{\rm a}_{\rm E}(\chi) = \sum_{\lambda',\lambda \in L} t^{\rm a}_{\lambda',\lambda}(\chi) [\ketbra{\lambda'}{\lambda}^{\rm a}_{\rm E}]$. We remark that $\rho_{\rm S}(\lambda)$, $\mathcal{O}_{\rm S}(\chi,\lambda)$ and $Q_{\rm S}(\lambda)$ are the same as in Eqs.~(\ref{eq:LF_rho})-(\ref{eq:LF_Q}). By properly choosing the subset of random variables $L$, the distribution $p^{\rm a}(\lambda)$ and transition matrices $t^{\rm a}_{\lambda',\lambda}(\chi)$, such a model can approximately characterize the quantum computer as we will show next. 

We focus on the case of only one random variable, and the generalization to the case of multiple variables is straightforward. In a quantum computation platform, the effect of the noise on quantum operations should be weak, i.e.~error rates are low. In this case, only low-order moments are important. Using the Taylor expansion, we have
\begin{eqnarray}
\rho_{\rm S}(\lambda) &=& \sum_{l=0}^{\infty} \rho_{\rm S}^{(l)} \lambda^l, \\
\mathcal{O}_{\rm S}(\chi,\lambda) &=& \sum_{l=0}^{\infty} \mathcal{O}_{\rm S}^{(l)}(\chi) \lambda^l, \\
Q_{\rm S}(\lambda) &=& \sum_{l=0}^{\infty} Q_{\rm S}^{(l)} \lambda^l.
\end{eqnarray}
Then, the quantum computation of a mean value, i.e.~the mean of an observable $Q$ in the state $\rho$ after a sequence of operations, can be expressed as
\begin{eqnarray}
&& \rbra{Q} \mathcal{O}(\chi_N) \cdots \mathcal{O}(\chi_1) \rket{\rho} \notag \\
&=& \sum_{\bf l} \rbra{Q_{\rm S}^{(l_{N+1})}} \mathcal{O}_{\rm S}^{(l_N)}(\chi_N) \cdots \mathcal{O}_{\rm S}^{(l_1)}(\chi_1) \rket{\rho_{\rm S}^{(l_0)}} \notag \\
&&\times \overline{\lambda_{N+1}^{l_{N+1}} \lambda_N^{l_N} \cdots \lambda_1^{l_1} \lambda_0^{l_0}},
\end{eqnarray}
where ${\bf l} = (l_0,l_1,\ldots,l_N,l_{N+1})$, $\lambda_j$ is the value of the variable at the time of the $j^{\rm th}$ operation, and the overline denotes the average. Therefore, the behaviour of the quantum computer is determined by correlations of random variables. If these correlations can be approximately reconstructed in the approximate model, the model approximately characterizes the quantum computer. 

Correlations are formally defined here. We introduce the operator $\hat{\lambda} = \int d\lambda \ketbra{\lambda}{\lambda}_{\rm E}$. Then,
\begin{eqnarray}
&& \overline{\lambda_{N+1}^{l_{N+1}} \lambda_N^{l_N} \cdots \lambda_1^{l_1} \lambda_0^{l_0}} \notag \\
&=& \Tr\left[ \hat{\lambda}^{l_{N+1}} \mathcal{T}_{\rm E}(\chi_N) \hat{\lambda}^{l_{N}} \cdots \mathcal{T}_{\rm E}(\chi_1) \hat{\lambda}^{l_{1}} \hat{\lambda}^{l_{0}} \rho_{\rm E} \right],
\end{eqnarray}
where $\rho_{\rm E} = \int d\lambda p(\lambda) \ketbra{\lambda}{\lambda}_{\rm E}$. 

\subsubsection{Second-order approximation}

First, we consider the case that the contribution of high-order correlations other than $\overline{\lambda_j}$ and $\overline{\lambda_{j'} \lambda_j}$ is negligible, the distribution of $\lambda$ is stationary, and the correlation time is much longer than the time scale of a quantum circuit. In this case, only $\overline{\lambda_j}$ and $\overline{\lambda_{j'} \lambda_j}$ are important. Without loss of generality, we assume that the distribution is centered at $\lambda = 0$, i.e.~$\overline{\lambda_j} = 0$. Because of the long correlation time, the change of the random variable is slow, and $\overline{\lambda_{j'} \lambda_j} \simeq \sigma^2$ is approximately a constant. Such correlations can be reconstructed in the approximate model with $m = 2$. We can take parameters in the approximate model as $L = \{ \lambda = \pm \sigma \}$, $p^{\rm a}(\pm \sigma) = 1/2$ and $t^{\rm a}_{\lambda',\lambda}(\chi) = \delta_{\lambda',\lambda}$. 

Next, we consider the case that the correlation is not a constant but decreases with time. We assume that for each operation $\chi$ the correlation is reduced by a factor of $e^{\Gamma(\chi)}$. If the correlation decreases exponentially with time, $\Gamma(\chi)$ is proportional to the operation time. The correlation reads $\overline{\lambda_{j'} \lambda_j} = \sigma^2 \exp[-\sum_{i = \max\{1,j\}}^{j'-1} \Gamma(\chi_i)]$. This two-time correlation can also be reconstructed in the approximate model with $m = 2$. The only difference is the transition matrix. We take the transition matrix as $\mathcal{T}^{\rm a}_{\rm E}(\chi) = \mathcal{T}'_{\rm E}(\Gamma(\chi))$, and
\begin{eqnarray}
\mathcal{T}'_{\rm E}(\Gamma) &\equiv & \frac{1+e^{-\Gamma}}{2} ( [\ketbra{+}{+}^{\rm a}_{\rm E}] + [\ketbra{-}{-}^{\rm a}_{\rm E}] ) \notag \\
&& + \frac{1-e^{-\Gamma}}{2} ( [\ketbra{-}{+}^{\rm a}_{\rm E}] + [\ketbra{+}{-}^{\rm a}_{\rm E}] ).
\end{eqnarray}
where $\ket{\pm}^{\rm a}_{\rm E} \equiv \ket{\lambda = \pm \sigma}^{\rm a}_{\rm E}$. Then we have $\mathcal{T}'_{\rm E}(\Gamma')\mathcal{T}'_{\rm E}(\Gamma) = \mathcal{T}'_{\rm E}(\Gamma'+\Gamma)$. 

\subsubsection{High-order approximations and multiple variables}

We consider the case that the change of the random variable is negligible in the time scale of a quantum circuit, i.e.~$\mathcal{T}_{\rm E}(\chi) \simeq [\openone_{\rm E}]$. Then, correlations become $\overline{\lambda_{N+1}^{l_{N+1}} \lambda_N^{l_N} \cdots \lambda_1^{l_1} \lambda_0^{l_0}} \simeq \int d\lambda \lambda^l$, where $l = \sum_{j=0}^{N+1} l_j$. If the contribution of correlations with $l > l_{\rm t}$ is negligible, we only need to reconstruct correlations with $l \leq l_{\rm t}$ in the approximate model, which is always possible by taking $m = \lceil (l_{\rm t}+1)/2 \rceil$~\cite{Miller1983}. We remark that $m$ is the dimension of the environment in the approximate model. 

It is similar for multiple random variables. If moments of the distribution converge rapidly, the Gaussian cubature approximation can be applied~\cite{DeVuysta2007}. Then, up to $l_{\rm t}^{\rm th}$-order moments can be reconstructed with $m = {n_\lambda+l_{\rm t} \choose l_{\rm t}}$, where $n_\lambda$ is the number of random variables. 

\subsection{Classical context-dependent noise}
\label{sec:context-dependent}

Context dependence is the effect that the error in an operation depends on previous operations, i.e.~the environment has the memory of previous operations. Here, we consider the case that the environment has a record of the classical information about previous operations. Because this kind of effects is in the scope of our general model of the quantum computer in Sec.~\ref{sec:model}, our results can be applied to the context-dependent noise. A list of previous operations is the classical information, so the memory of previous operations can be treated as a set of classical variables whose evolution is operation-dependent. Therefore, we can use the same formalism for classical random variables to characterize the context-dependent noise. We consider two examples as follows. 

In the ion trap, the temperature of ions may depend on how many gates have been performed after the last cooling operation, and the fidelity of a gate depends on the temperature. This effect can be characterized using a set of discretized variables $\lambda = (n_1,n_2,\ldots)$. Here, $n_i$ denotes the phonon number of the $i^{\rm th}$ mode. Because of the low temperature of ions, each $n_i$ can be truncated at a small number. Suppose the evolution of the qubit state mainly depends on the distribution at the beginning of the gate, the gate can be expressed as the same as in Eq.~(\ref{eq:LF_O_a}), where $\mathcal{T}(\chi)$ describes the heating process in the gate $\chi$. The cooling operation can also be expressed in this form. Then, we can apply the approximation similar to classical random variables. 

If the error in an operation only significantly depends on a few previous operations, we can use a low-dimensional environment to characterize the effect. We focus on the case that the error only depends on the last one operation, and it can be generalised to the case of depending on multiple previous operations. We characterize this effect using one discretised variable $\lambda \in \{\chi\}$, where $\{\chi\}$ is the list of all possible operations. The state after the operation $\lambda$ can be expressed in the form $\rho = \rho_{\rm S} \otimes \ketbra{\lambda}{\lambda}_{\rm E}$. An operation can be expressed in the form $\mathcal{O}(\chi) = \sum_\lambda \mathcal{O}_{\rm S}(\chi,\lambda) \otimes [\ketbra{\chi}{\lambda}_{\rm E}]$, where $\mathcal{O}_{\rm S}(\chi,\lambda)$ is the operation on the system when the last operation is $\lambda$. After the operation, the state becomes $\mathcal{O}(\chi) \rho = [\mathcal{O}_{\rm S}(\chi,\lambda) \rho_{\rm S}] \otimes \ketbra{\chi}{\chi}_{\rm E}$. We remark that it is not necessary that $\ket{\lambda}_{\rm E}$ corresponds to a pure state in a physical Hilbert space. 

\section{Approximate quantum tomography}
\label{sec:appQT}

The exact tomography protocol is not practical because of the high-dimensional state space of the environment. In Sec.~\ref{sec:app_model}, we show that an effective model with a low-dimensional environment state space can approximately characterize the quantum computer for typical temporally correlated noises. In this section, we discuss how to implement LOT to obtain a low-dimensional approximate model of the quantum computer. There are two approaches of self-consistent tomography, the linear inversion method (LIM) and the maximum likelihood estimation (MLE)~\cite{Merkel2013, BlumeKohout2013, Stark2014, Greenbaum2015, BlumeKohout2017, Sugiyama2018}, and we will discuss both of them. 

\subsection{Linear inversion method}
\label{sec:LIM}

Given sufficient data from the experiment, we can use LIM to obtain an exact model of the quantum computer as discussed in Sec.~\ref{sec:exact}. However, to obtain an approximate model, even if the approximate model exists, LIM does not always work. We suppose that $d \times d$ matrices $M^{\rm a}_{\rm in}$, $M^{\rm a}_{\rm out}$ and $\{ \mathcal{O}^{\rm a}(\chi) \}$ satisfy
\begin{eqnarray}
&& \Vert M_{\rm out} \mathcal{O}(\chi_N) \cdots \mathcal{O}(\chi_1) M_{\rm in} \notag \\
&& - M^{\rm a}_{\rm out} \mathcal{O}^{\rm a}(\chi_N) \cdots \mathcal{O}^{\rm a}(\chi_1) M^{\rm a}_{\rm in} \Vert_{\rm max}
\leq \epsilon_N
\label{eq:approximation}
\end{eqnarray}
for any sequence of elementary operations, where $\epsilon_N$ is a small quantity depending on $N$, and $d < d_V$. Then, these matrices form a model that approximately characterizes the quantum computer. If the approximate model exists, we only need to obtain $d \times d$ matrices $g^{\rm a} = M^{\rm a}_{\rm out} M^{\rm a}_{\rm in}$ and $\{ \widetilde{\mathcal{O}}^{\rm a}(\chi) = M^{\rm a}_{\rm out} \mathcal{O}^{\rm a}(\chi) M^{\rm a}_{\rm in} \}$ in order to approximately characterize the quantum computer. Because $\norm{g - g^{\rm a}}_{\rm max} \leq \epsilon_0$ and $\norm{\widetilde{\mathcal{O}}(\chi) - \widetilde{\mathcal{O}}^{\rm a}(\chi)}_{\rm max} \leq \epsilon_1$, we can directly use $g$ and $\{ \widetilde{\mathcal{O}}(\chi) \}$ as estimates of $g^{\rm a}$ and $\{ \widetilde{\mathcal{O}}^{\rm a}(\chi) \}$, which can be obtained in the experiment. However, $g^{-1}$ may be very different from $(g^{\rm a})^{-1}$, and in this case Eq.~(\ref{eq:main_GST}) may not even approximately hold. We remark that Eq.~(\ref{eq:main_GST}) always exactly holds if $d = d_V$. If Eq.~(\ref{eq:main_GST}) does not hold, LIM does not work. 

LIM works for the approximate model if the following conditions are satisfied. $\{\rket{\rho^{\rm a}_i}\}$ are columns of $M^{\rm a}_{\rm in}$, and $\{\rbra{Q^{\rm a}_k}\}$ are rows of $M^{\rm a}_{\rm out}$. Then, if $\norm{\rket{\rho^{\rm a}_i}} \leq N^{\rm a}_{\rho}$, $\norm{\rbra{Q^{\rm a}_k}} \leq N^{\rm a}_Q$, $\norm{\mathcal{O}^{\rm a}(\chi)} \leq N^{\rm a}_\mathcal{O}$, $\norm{ M_{\rm in}^{\rm a} g^{-1} M_{\rm out}^{\rm a} - \openone } \leq \epsilon_g$ and $\norm{(M_{\rm out}^{\rm a})^{-1} \widetilde{\mathcal{O}}(\chi) (M_{\rm in}^{\rm a})^{-1} - \mathcal{O}^{\rm a}(\chi)} \leq \epsilon_\mathcal{O}$, we have
\begin{eqnarray}
&& \Vert \widetilde{\mathcal{O}}(\chi_N) g^{-1} \cdots \widetilde{\mathcal{O}}(\chi_2) g^{-1} \widetilde{\mathcal{O}}(\chi_1) \notag \\
&& - M_{\rm out}^{\rm a} \mathcal{O}^{\rm a}(\chi_N) \cdots \mathcal{O}^{\rm a}(\chi_2) \mathcal{O}^{\rm a}(\chi_1) M_{\rm in}^{\rm a} \Vert_{\rm max} \notag \\
&\leq & N^{\rm a}_Q N^{\rm a}_\rho \left[ \left(1 + \epsilon_g\right)^{N-1} \left(N_\mathcal{O} + \epsilon_\mathcal{O}\right)^N - N_\mathcal{O}^N \right] \notag \\
&\sim & N^{\rm a}_Q N^{\rm a}_\rho \times N_\mathcal{O}^{N-1} \left[ (N-1)N_\mathcal{O}\epsilon_g + N\epsilon_\mathcal{O} \right]
\label{eq:approximationLIM}
\end{eqnarray}
for any sequence of elementary operations. See Appendix~\ref{app:LIM} for the proof. Therefore, LIM works under conditions that $N^{\rm a}_\mathcal{O} \lesssim 1$, and $\epsilon_g$ and $\epsilon_\mathcal{O}$ are small quantities. 

We apply this result to the approximate model given by the approximately invariant subspace $\Pi_{\rm in}$. We have
\begin{eqnarray}
&& \Vert \widetilde{\mathcal{O}}(\chi_N) g^{-1} \cdots \widetilde{\mathcal{O}}(\chi_2) g^{-1} \widetilde{\mathcal{O}}(\chi_1) \notag \\
&& - M_{\rm out} \Pi_{\rm in}\mathcal{O}(\chi_N)\Pi_{\rm in} \cdots \Pi_{\rm in}\mathcal{O}(\chi_1)\Pi_{\rm in} M_{\rm in} \Vert_{\rm max} \notag \\
&\leq & N_Q (1+\epsilon_\Pi) N_\rho \left[ \left( N_\mathcal{O} + \frac{\epsilon}{1-\epsilon_\Pi} \right)^N - \left( N_\mathcal{O} + \epsilon \right)^N \right] \notag \\
&\sim & N_Q N_\rho N_\mathcal{O}^{N-1} \times \frac{(1+\epsilon_\Pi)\epsilon_\Pi}{1-\epsilon_\Pi}N\epsilon
\end{eqnarray}
for any sequence of elementary operations. See Appendix~\ref{app:LIM} for the proof. Here, $\epsilon_\Pi = \norm{\Pi_{\rm in} - \Pi_{\rm out}}$, and we have $N_\mathcal{O} = 1$ by taking the trace norm. Therefore, if $\norm{ \Pi_{\rm in} \mathcal{O}(\chi) \Pi_{\rm in} - \mathcal{O}(\chi) \Pi_{\rm in} } \ll 1$ and $\norm{\Pi_{\rm in} - \Pi_{\rm out}} \ll 1$ for the trace norm, LIM can be applied. Here, $\norm{\Pi_{\rm in} - \Pi_{\rm out}} \ll 1$ means that two subspaces $\Pi_{\rm in}$ and $\Pi_{\rm out}$ are approximately the same. 

In order to implement LIM, we need to find states $\{\rket{\rho_i}\}$ and observables $\{\rbra{Q_k}\}$ corresponding to an approximately invariant subspace $\Pi_{\rm in}$. For this purpose, we choose a set of trial states $\{\rket{\rho^{\rm t}_i}\}$ and a set of trial observables $\{\rbra{Q^{\rm t}_k}\}$. The most interesting approximate invariant subspace is the subspace containing the initial state $\rket{\rho_{\rm in}}$. If such an approximate invariant subspace exists, states in the form $\mathcal{O}(\chi_N) \cdots \mathcal{O}(\chi_1) \rket{\rho_{\rm in}}$ are all approximately within the subspace, as long as $N$ is sufficiently small. Therefore, we can choose the initial state $\rket{\rho_{\rm in}}$ and states in the form $\mathcal{O}(\chi_N) \cdots \mathcal{O}(\chi_1) \rket{\rho_{\rm in}}$ as trial states. Similarly, we can choose the observable $\rbra{Q_{\rm out}}$ and effective observables in the form $\rbra{Q_{\rm out}} \mathcal{O}(\chi_N) \cdots \mathcal{O}(\chi_1)$ as trial observables. 

In the ideal case, i.e.~$\Pi_{\rm in}$ is an exactly invariant subspace, and trial states and observables are exactly within $\Pi_{\rm in}$, i.e.~$\Pi_{\rm in} \rket{\rho^{\rm t}_i} = \rket{\rho^{\rm t}_i}$ and $\rbra{Q^{\rm t}_k} \Pi_{\rm in} = \rbra{Q^{\rm t}_k}$. Then, the rank of $g^{\rm t} = M^{\rm t}_{\rm out} M^{\rm t}_{\rm in}$ is not greater than the dimension of the subspace $\Pi_{\rm in}$. Here, $M^{\rm t}_{\rm in}$ and $M^{\rm t}_{\rm out}$ are matrices corresponding to trial states and observables, respectively. If the subspace $\Pi_{\rm in}$ is approximately invariant, the matrix $g^{\rm t}$ should still be close to a matrix with a rank not greater than the subspace dimension. Therefore, we can determine $M_{\rm in}$ and $M_{\rm out}$ by performing a truncation on the spectrum of singular values of $g^{\rm t}$, i.e.~we choose $M_{\rm in}$ and $M_{\rm out}$ corresponding to the greatest $d$ singular values of $g^{\rm t}$. Suppose the singular value decomposition is $Ug^{\rm t}V = \Lambda$, where $\Lambda = {\rm diag}(s_1, s_2, \ldots, s_{d^{\rm t}})$ and $s_1 \geq s_2 \geq \cdots \geq s_{d^{\rm t}}$, then we use states $\{\rket{\rho_i} = \sum_j \rket{\rho^{\rm t}_i} V_{i,j} ~\vert~ i = 1,\ldots d \}$ and observables $\{\rbra{Q_k} = \sum_j U_{k,j} \rbra{Q^{\rm t}_j} ~\vert~ k = 1,\ldots d \}$ to implement LOT. 

The approximate LOT protocol using LIM is as follows: 
\begin{itemize}
\item[$\bullet$] Choose a set of states $\{ \rket{\rho^{\rm t}_i} = \mathcal{O}_i \rket{\rho_{\rm in}} ~\vert~ \mathcal{O}_i\in O; i = 1,\cdots,d^{\rm t}\}$ and a set of observables $\{ \rbra{Q^{\rm t}_k} = \rbra{Q_{\rm out}} \mathcal{O}_k' ~\vert~ \mathcal{O}_k'\in O; k = 1,\cdots,d^{\rm t} \}$. We always take $\rket{\rho^{\rm t}_1} = \rket{\rho_{\rm in}}$ and $\rbra{Q^{\rm t}_1} = \rbra{Q_{\rm out}}$. Here, $O$ is the set of operation sequences. 
\item[$\bullet$] Obtain matrices $g^{\rm t} = M^{\rm t}_{\rm out} M^{\rm t}_{\rm in}$ and $\widetilde{\mathcal{O}}^{\rm t}(\chi) = M^{\rm t}_{\rm out} \mathcal{O}(\chi) M^{\rm t}_{\rm in}$ for each $\chi$ in the experiment. Here, $M^{\rm t}_{\rm in} = [ \; \rket{\rho^{\rm t}_1} \;\cdots\; \rket{\rho^{\rm t}_{d^{\rm t}}} \; ]$ and $M^{\rm t}_{\rm out} = [ \; \rbra{Q^{\rm t}_1}^{\rm T} \;\cdots\; \rbra{Q^{\rm t}_{d^{\rm t}}}^{\rm T} \; ]^{\rm T}$. 
\item[$\bullet$] Compute the singular value decomposition $Ug^{\rm t}V = \Lambda$, where $\Lambda = {\rm diag}(s_1, s_2, \ldots, s_{d^{\rm t}})$, and singular values are sorted in the descending order $s_1 \geq s_2 \geq \cdots \geq s_{d^{\rm t}}$. 
\item[$\bullet$] Choose the dimension $d$. Compute $g = {\rm diag}(s_1, s_2, \ldots, s_d) = D \Lambda D^\dag$ and $\widetilde{\mathcal{O}}(\chi) = D U \widetilde{\mathcal{O}}^{\rm t}(\chi) V D^\dag$ for each $\chi$. Here, $D$ is a $d\times d^{\rm t}$ matrix, and $D_{i,i'} = \delta_{i,i'}$. 
\item[$\bullet$] Choose a $d$-dimensional invertible real matrix $\widehat{M}_{\rm in}$, and compute $\widehat{M}_{\rm out} = g\widehat{M}_{\rm in}^{-1}$. 
\item[$\bullet$] Compute $\rket{\widehat{\rho}_{\rm in}} = \sum_{i=1}^d \widehat{M}_{\rm in; \bullet,i} V^*_{1,i} = \widehat{M}_{\rm out}^{-1} D U g^{\rm t}_{\bullet,1}$, $\rbra{\widehat{Q}_{\rm out}} = \sum_{k=1}^d U^*_{k,1} \widehat{M}_{\rm out; k,\bullet} = g^{\rm t}_{1,\bullet} V D^\dag \widehat{M}_{\rm in}^{-1}$, and $\widehat{\mathcal{O}}(\chi) = \widehat{M}_{\rm out}^{-1} \widetilde{\mathcal{O}}(\chi) \widehat{M}_{\rm in}^{-1}$ for each $\chi$. 
\end{itemize} 

\subsection{Maximum likelihood estimation}

\begin{figure}[tbp]
\centering
\includegraphics[width=1\linewidth]{\figpath /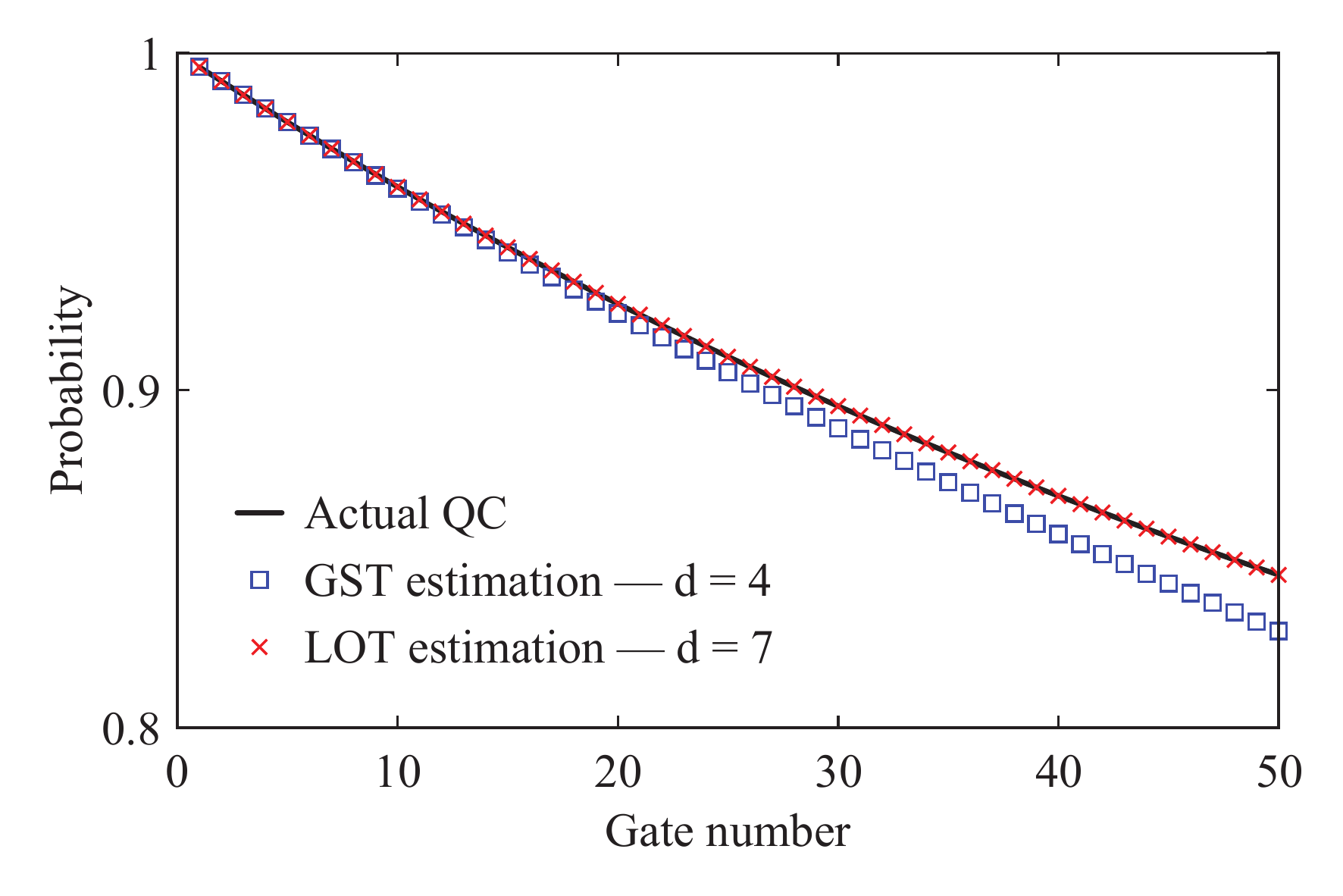}
\caption{
Probabilities in the state $\ket{0}$ after a sequence of randomly chosen Hadamard and phase gates as functions of the gate number. We initialise the qubit in the state $\ket{0}$, perform the random gate sequence and measure the probability in the state $\ket{0}$. We only take into account gate sequences that the final state is $\ket{0}$ in the case of ideal gates without error. Therefore the probability should be $1$ in this case. In our simulation, we take $\sigma = 1$ and $\eta = 0.02$. In the presence of errors, the probability in the actual quantum computation (QC) decreases with the gate number (black curve). Based on error models obtained in linear operator tomography (LOT) using maximum likelihood estimation, we can estimate the decreasing probability, and the results are plotted. We can find the that the error model with $d = 7$ (red crosses) fits the actual behavior of the quantum computer much more accurately than the error model with $d = 4$ (blue squares). When $d = 4$, LOT is equivalent to conventional gate set tomography (GST). Results for the linear inversion method are similar. See Appendix~\ref{app:simulation} for details of the simulation and more data. 
}
\label{fig:plot}
\end{figure}

The alternative method for determining the error model is based on MLE. Given a model of the quantum computer with unknown parameters, MLE is to find the estimated values of the unknown parameters, such that the likelihood of samples observed in the experiment is maximized. Let $d$-dimensional column vector $\rket{\bar{\rho}_{\rm in}({\bf x})}$, row vector $\rbra{\bar{Q}_{\rm out}({\bf x})}$ and matrices $\{ \bar{\mathcal{O}}(\chi,{\bf x}) \}$, respectively corresponding to the initial state, measured observable and operations, be the theoretical model of the quantum computer depending on parameters ${\bf x}$. Our goal is to estimate parameters ${\bf x}$ based on data from the experiment. The mean of $Q_{\rm out}$ in $\rho_{\rm in}$ after a sequence of operations measured in the experiment is $C = \rbra{Q_{\rm out}} \mathcal{O}(\chi_N) \cdots \mathcal{O}(\chi_1) \rket{\rho_{\rm in}} + \delta$, where $\delta$ is the deviation from the actual mean value, and the mean according to the model is $\bar{C}({\bf x}) = \rbra{\bar{Q}_{\rm out}({\bf x})} \bar{\mathcal{O}}(\chi_N,{\bf x}) \cdots \bar{\mathcal{O}}(\chi_1,{\bf x}) \rket{\bar{\rho}_{\rm in}({\bf x})}$. Using the Gaussian approximation, the likelihood function to be maximised is $L({\bf x}) = \exp\{ -[\bar{C}({\bf x}) - C]^2/\sigma^2 \}$, where $\sigma$ is the standard deviation of $C$. In the practical implementation, multiple quantum circuits and corresponding mean values are needed to determine the error model. The protocol is as follows: 
\begin{itemize}
\item[$\bullet$] Parameterize the $d$-dimensional column vector $\rket{\bar{\rho}_{\rm in}({\bf x})}$, row vector $\rbra{\bar{Q}_{\rm out}({\bf x})}$ and matrix $\bar{\mathcal{O}}(\chi,{\bf x})$ for each $\chi$ as functions of parameters ${\bf x} = ( x_1, x_2, \ldots )$.
\item[$\bullet$] Choose $M$ circuits $\{ \boldsymbol{\chi}_1, \ldots, \boldsymbol{\chi}_M \}$. For each circuit $\boldsymbol{\chi}_m = (\chi_{m,1}, \cdots, \chi_{m,N_m})$, obtain $\rbra{Q_{\rm out}} \mathcal{O}(\chi_{m,N_m}) \cdots \mathcal{O}(\chi_{m,1}) \rket{\rho_{\rm in}}$ in the experiment. The result is $C_m$.
\item[$\bullet$] Minimise the likelihood function $L({\bf x}) = \prod_{m=1}^M \exp\{ -[\bar{C}_m({\bf x}) - C_m]^2/\sigma_m^2 \}$, where $\bar{C}_m({\bf x}) = \rbra{\bar{Q}_{\rm out}({\bf x})} \bar{\mathcal{O}}(\chi_{m,N_m},{\bf x}) \cdots \bar{\mathcal{O}}(\chi_{m,1},{\bf x}) \rket{\bar{\rho}_{\rm in}({\bf x})}$, and $\sigma_m^2$ is the variance of $C_m$. The likelihood function is minimised at $\widehat{\bf x} = \text{arg min}_{\bf x} \{ L({\bf x}) \}$.
\item[$\bullet$] Compute $\rket{\widehat{\rho}_{\rm in}} = \rket{\bar{\rho}_{\rm in}(\widehat{\bf x})}$, $\rbra{\widehat{Q}_{\rm out}} = \rbra{\bar{Q}_{\rm out}(\widehat{\bf x})}$, and $\widehat{\mathcal{O}}(\chi) = \bar{\mathcal{O}}(\chi,\widehat{\bf x})$ for each $\chi$.
\end{itemize}
We can parameterize the error model by taking each vector and matrix element as a parameter. If the main source of temporal correlations is low-frequency noise or context-dependent noise as discussed in Sec.~\ref{sec:app_model}, we can parameterize the error model according to Eqs.~(\ref{eq:LF_rho_a})-(\ref{eq:LF_Q_a}). 

\section{Numerical simulation of low-frequency noise}
\label{sec:simulation}

\begin{figure}[tbp]
\centering
\includegraphics[width=1\linewidth]{\figpath /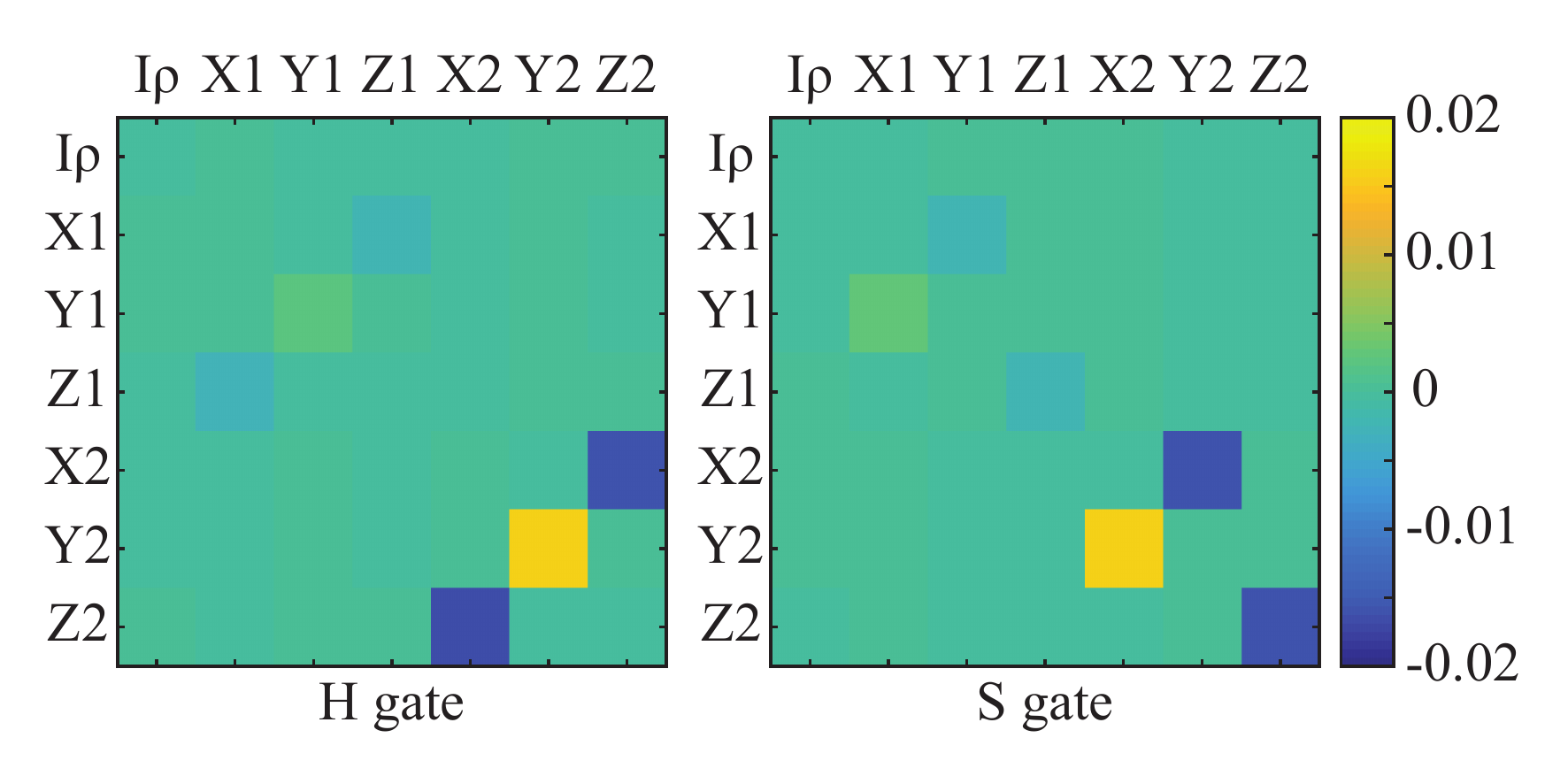}
\caption{
Pauli transfer matrices obtained using the linear inversion method. The difference between the matrix obtained in tomography and the matrix of the ideal gate, i.e.~$M_{\mathcal{O}}-M_{\mathcal{O}}^{\rm ideal}$, is plotted. See Appendix~\ref{app:simulation} for $M_{\mathcal{O}}^{\rm ideal}$. 
}
\label{fig:matrix}
\end{figure}

To demonstrate our protocols numerically, we consider a model of one qubit with time-dependent gate fidelities and implement LOT using the numerical simulation on a classical computer. In the model, gate fidelities depend on a low-frequency time-dependent variable $\lambda$, whose distribution is Gaussian. We assume that the change of the variable is negligible in the time scale of a quantum circuit. The initial state and the observable to be measured are error free, which are $\rho_{\rm in} = \ketbra{0}{0}_{\rm S}\otimes \rho_{\rm E}$ and $Q_{\rm out} = \ketbra{0}{0}_{\rm S}\otimes \openone_{\rm E}$, respectively. Here, the state of the environment is $\rho_{\rm E} = \int d\lambda \frac{1}{2\pi\sigma^2}\exp(\frac{\lambda^2}{2\sigma^2})\ketbra{\lambda}{\lambda}_{\rm E}$. We remark that LOT can deal with state preparation and measurement errors as the same as GST. We neglect state preparation and measurement errors in our simulation for simplification. Errors in single-qubit gates are depolarizing errors, and depolarizing rates depend on $\lambda$. For a unitary single-qubit gate $G$, the actual gate with error is $\mathcal{O}_{\rm S}(G, \lambda) = \mathcal{E}(\epsilon_G(\lambda))[G]$, where $\epsilon_G(\lambda)$ is the depolarizing rate, $\mathcal{E}(\epsilon) = (1-\epsilon)[\openone_{\rm S}] + \epsilon\mathcal{D}$, and $\mathcal{D} = \frac{1}{4}([\openone_{\rm S}] + [X_{\rm S}] + [Y_{\rm S}] + [Z_{\rm S}])$. Here, $X$, $Y$ and $Z$ are Pauli operators. Then the operation on SE for the gate $G$ is $\mathcal{O}(G) = \int d\lambda \mathcal{O}_{\rm S}(G, \lambda) \otimes \ketbra{\lambda}{\lambda}_{\rm E}$.

We consider two single-qubit gates, the Hadamard gate $H$ and the phase gate $S$, which can generate all single-qubit Clifford gates. We take $\epsilon_H(\lambda) = \epsilon_S(\lambda) = \eta[1 - \exp(-\lambda^2)]$, therefore, two gates are both optimised at $\lambda = 0$. Here, $\eta \in [0,1]$ denotes the strength of the noise. RB is the usual way of the verification of a quantum computing system~\cite{Emerson2005, Knill2008, Wallman2014}. In our simulation, we perform a sequence of $H$ and $S$ gates randomly chosen in the uniform distribution. We initialize the qubit in the state $\ket{0}$, perform the random gate sequence and measure the probability in the state $\ket{0}$. We only take into account gate sequences that the final state is $\ket{0}$ in the case of ideal gates without error, so that the probability in the state $\ket{0}$ is expected to be $1$. When errors are switched on, the probability in the state $\ket{0}$ is $F(N_{\rm g}) = (1 + 1/\sqrt{1+2N_{\rm g}\sigma^2})/2$ if $\eta = 1$, where $N_{\rm g}$ is the number of gates in the random gate sequence. The non-exponential decay of the probability is due to temporal correlations~\cite{Emerson2005, Knill2008, Magesan2011, Wallman2014, Fogarty2015, Ball2016, Mavadia2018}. Without temporal correlation, the probability decreases exponentially with the gate number. If depolarizing rates are constants, i.e.~$\epsilon_H(\lambda) = \epsilon_S(\lambda) = \epsilon$, we have $[1+(1-\epsilon)^{N_{\rm g}}]/2$.

In our simulation, we implement both LIM and MLE. We take the dimension of the state space $d = 4,7$ to compare LOT with conventional GST. In approximate models of classical random variables with stationary distribution (see Sec.~\ref{sec:low-frequency}), the state space is $[(d_{\rm S}^2-1)m+1]$-dimensional when the system and environment Hilbert spaces are respectively $d_{\rm S}$-dimensional and $m$-dimensional, as explained in Appendix~\ref{app:dimension}. Therefore, $d=4,7$ correspond to $m=1,2$ approximations, respectively. If $d = 4$, the LOT protocol is equivalent to conventional GST protocol, because the one-dimensional environment is trivial and does not have any effect. As shown in Fig.~\ref{fig:plot}, LOT with $d = 7$ can characterize the behavior of the quantum computer much more accurately than LOT with $d = 4$ (i.e.~conventional GST).

In the simulation of LOT using MLE, we parametrize the state, observable and operations as follows. The state is in the form $\bar{\rho}_{\rm in} = \sum_{\lambda \in L} p^{\rm a}(\lambda) \ketbra{0}{0}_{\rm S} \otimes \ketbra{\lambda}{\lambda}^{\rm a}_{\rm E}$. The observable is in the form $\bar{Q}_{\rm out} = \sum_{\lambda \in L} \ketbra{0}{0}_{\rm S} \otimes \ketbra{\lambda}{\lambda}^{\rm a}_{\rm E}$. The gate $G$ with error is in the form $\bar{\mathcal{O}}(G) = \sum_{\lambda \in L} \mathcal{E}(\epsilon^{\rm a}_G(\lambda)) [G] \otimes \ketbra{\lambda}{\lambda}^{\rm a}_{\rm E}$. We take $\{ p^{\rm a}(\lambda) \}$ and $\{ \epsilon^{\rm a}_G(\lambda) \}$ as parameters (i.e.~${\bf x}$) in MLE. With the error model parametrized in this way, the number of values that $\lambda$ can take is important, but the value of $\lambda$ is not important. For the one-dimensional environment approximation, i.e.~$d = 4$, we take $L = \{1\}$; and for the two-dimensional environment approximation, i.e.~$d = 7$, we take $L = \{1,2\}$. Using MLE, we obtain $p^{\rm a}(1) = 0.5606$, $p^{\rm a}(2) = 0.4394$, $\epsilon^{\rm a}_H(1) \simeq \epsilon^{\rm a}_S(1) = 2.485\times10^{-3}$ and $\epsilon^{\rm a}_H(2) \simeq \epsilon^{\rm a}_S(2) = 1.606\times10^{-2}$. 

In the case that the random variable takes two values $\lambda = 1,2$, the environment state is $\rho^{\rm a}_{\rm E} = p^{\rm a}(1) \ketbra{1}{1}^{\rm a}_{\rm E} + p^{\rm a}(2) \ketbra{2}{2}^{\rm a}_{\rm E}$. Because the distribution is stationary, the state of the environment does not evolve. Usually, we can use an eight-dimensional Pauli transfer matrix to represent an operation on a qubit and a classical bit~\cite{BlumeKohout2017}. However, because the component $I_{\rm S}\otimes \left[ p^{\rm a}(2) \ketbra{1}{1}^{\rm a}_{\rm E} - p^{\rm a}(1) \ketbra{2}{2}^{\rm a}_{\rm E} \right]$ is always zero (see Appendix~\ref{app:dimension}), the dimension of the state space is effectively seven. For the operation $\mathcal{O}$, the corresponding seven-dimensional Pauli transfer matrix is $M_{\mathcal{O};\sigma,\tau} = \Tr[\sigma\mathcal{O}(\tau)]/2$, where $\sigma,\tau = I_{\rm S}\otimes \rho^{\rm a}_{\rm E}/\sqrt{a},X_{\rm S}\otimes \ketbra{1}{1}^{\rm a}_{\rm E},Y_{\rm S}\otimes \ketbra{1}{1}^{\rm a}_{\rm E},Z_{\rm S}\otimes \ketbra{1}{1}^{\rm a}_{\rm E},X_{\rm S}\otimes \ketbra{2}{2}^{\rm a}_{\rm E},Y_{\rm S}\otimes \ketbra{2}{2}^{\rm a}_{\rm E},Z_{\rm S}\otimes \ketbra{2}{2}^{\rm a}_{\rm E}$ and $a = \Tr(\rho^{{\rm a}2}_{\rm E})$. Pauli transfer matrices obtained using LIM are show in Fig.~\ref{fig:matrix}. 

\section{Conclusions}

We have proposed self-consistent tomography protocols to obtain the model of temporally correlated errors in a quantum computer. Given sufficient data from the experiment, the model obtained in our protocol can be exact. We also propose approximate models for the practical implementation. To obtain approximate models characterizing temporal correlations, more quantities need to be measured compared with conventional QPT and GST, but the overhead is moderate. We can use such approximate models to predict the behavior of a quantum computer much more accurately than the model obtained in GST, for systems with temporally correlated errors. Our protocols provide a way to quantitatively assess temporal correlations in quantum computers. 

\begin{acknowledgments}
This work was supported by the National Key R\&D Program of China (Grant No. 2016YFA0301200) and the National Basic Research Program of China (Grant No. 2014CB921403). It is also supported by Science Challenge Project (Grant No. TZ2017003) and the National Natural Science Foundation of China (Grants No. 11774024, No. 11534002, and No. U1530401).
YL is supported by National Natural Science Foundation of China (Grant No. 11875050) and NSAF (Grant No. U1730449).
\end{acknowledgments}

\appendix

\section{Exact linear operator tomography}
\label{app:exact}

We consider two subspaces $\hat{V}_{\rm in} = {\rm span} ( \{ \rbra{Q_{\rm out}} \mathcal{O} P_{\rm in} ~\vert~ \mathcal{O} \in O \} )$ and $\hat{V}_{\rm out} = {\rm span} ( \{ P_{\rm out} \mathcal{O} \rket{\rho_{\rm in}} ~\vert~ \mathcal{O} \in O \} )$. We use $\hat{P}_{\rm in}$ and $\hat{P}_{\rm out}$ to denote the orthogonal projection on $\hat{V}_{\rm in}$ and $\hat{V}_{\rm out}$, respectively. Here, $\hat{V}_{\rm out} = V$ and $\hat{P}_{\rm out} = P$. $P_{{\rm in}\cap \overline{\rm out}}$ is the orthogonal projection on the intersection of $V_{\rm in}$ and the orthogonal complement of $V_{\rm out}$. $P_{{\rm out}\cap \overline{\rm in}}$ is the orthogonal projection on the intersection of $V_{\rm out}$ and the orthogonal complement of $V_{\rm in}$. Then, $\hat{P}_{\rm in} = P_{\rm in} - P_{{\rm in}\cap \overline{\rm out}}$ and $\hat{P}_{\rm out} = P_{\rm out} - P_{{\rm out}\cap \overline{\rm in}}$.


\begin{lemma}
Let $\mathcal{O} \in O$, all the following expressions are valid.
\begin{eqnarray}
\hat{P}_{\rm in} &=& \hat{P}_{\rm in}P_{\rm in} = P_{\rm in}\hat{P}_{\rm in}, \label{eq:in} \\
\hat{P}_{\rm out} &=& \hat{P}_{\rm out}P_{\rm out} = P_{\rm out}\hat{P}_{\rm out}, \label{eq:out} \\
P_{\rm out}P_{\rm in} &=& P_{\rm out}\hat{P}_{\rm in} = \hat{P}_{\rm out}P_{\rm in}, \label{eq:out_in} \\
P_{\rm out} \mathcal{O} P_{\rm in} &=& P_{\rm out} \mathcal{O} \hat{P}_{\rm in} = P_{\rm out} \hat{P}_{\rm in} \mathcal{O} \hat{P}_{\rm in} \notag \\
&=& \hat{P}_{\rm out} \mathcal{O} P_{\rm in} = \hat{P}_{\rm out} \mathcal{O} \hat{P}_{\rm out} P_{\rm in}. \label{eq:out_O_in}
\end{eqnarray}
\label{lemma1}
\end{lemma}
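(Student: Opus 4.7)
The plan is to base everything on two structural facts about $V_{\rm in}$ and $V_{\rm out}$ and then derive the four identities by short substitutions. The first fact is $O$-invariance: since $O$ is closed under composition, $\mathcal{O}\rket{v}\in V_{\rm in}$ for every $\rket{v}\in V_{\rm in}$ and every $\mathcal{O}\in O$, equivalently $P_{\rm in}\mathcal{O} P_{\rm in}=\mathcal{O} P_{\rm in}$, and symmetrically $\rbra{w}\mathcal{O}\in V_{\rm out}$ for $\rbra{w}\in V_{\rm out}$, equivalently $P_{\rm out}\mathcal{O} P_{\rm out}=P_{\rm out}\mathcal{O}$. The second fact is the inclusions $\hat V_{\rm in}\subseteq V_{\rm in}$ and $\hat V_{\rm out}\subseteq V_{\rm out}$, immediate from $\hat V_{\rm in}=P_{\rm in}V_{\rm out}$ and $\hat V_{\rm out}=P_{\rm out}V_{\rm in}$, together with the decompositions $\hat P_{\rm in}=P_{\rm in}-P_{{\rm in}\cap\overline{\rm out}}$ and $\hat P_{\rm out}=P_{\rm out}-P_{{\rm out}\cap\overline{\rm in}}$ stated in the preamble; the latter I would justify by identifying the orthogonal complement of $\hat V_{\rm in}$ inside $V_{\rm in}$ as $V_{\rm in}\cap\overline V_{\rm out}$ (a vector $v\in V_{\rm in}$ is orthogonal to every $P_{\rm in}w$ with $w\in V_{\rm out}$ iff it is orthogonal to $V_{\rm out}$), and analogously for the ``out'' side.

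With these in hand, Eq.~(\ref{eq:in}) and Eq.~(\ref{eq:out}) are just the statement that if one orthogonal projection has range contained in the range of another, the smaller is absorbed by the larger on either side. Equation~(\ref{eq:out_in}) follows by expanding $\hat P_{\rm in}=P_{\rm in}-P_{{\rm in}\cap\overline{\rm out}}$ and noting $P_{\rm out}P_{{\rm in}\cap\overline{\rm out}}=0$ because the second projection has range in $\overline V_{\rm out}$; the symmetric expansion of $\hat P_{\rm out}$ gives the other equality.

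For Eq.~(\ref{eq:out_O_in}) I would treat the five expressions as a chain of four substitutions. The two in which $P_{\rm in}$ is swapped for $\hat P_{\rm in}$ on the right or $P_{\rm out}$ for $\hat P_{\rm out}$ on the left reduce to $P_{\rm out}\mathcal{O} P_{{\rm in}\cap\overline{\rm out}}=0$ and $P_{{\rm out}\cap\overline{\rm in}}\mathcal{O} P_{\rm in}=0$: the first holds because right-invariance gives $\rbra{w}\mathcal{O}\in V_{\rm out}$ for any $\rbra{w}\in V_{\rm out}$, so $\rbra{w}\mathcal{O}$ annihilates the range of $P_{{\rm in}\cap\overline{\rm out}}\subseteq\overline V_{\rm out}$; the second holds because left-invariance gives that $\mathcal{O} P_{\rm in}$ has range in $V_{\rm in}$, which is annihilated by $P_{{\rm out}\cap\overline{\rm in}}$. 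Inserting $\hat P_{\rm in}$ between $P_{\rm out}$ and $\mathcal{O}$ reduces, via $\mathcal{O}\hat P_{\rm in}=P_{\rm in}\mathcal{O}\hat P_{\rm in}$ and Eq.~(\ref{eq:in}), to $P_{\rm out}P_{{\rm in}\cap\overline{\rm out}}\mathcal{O}\hat P_{\rm in}=0$. For the last equality, inserting $\hat P_{\rm out}$ between $\mathcal{O}$ and $P_{\rm in}$, I would first rewrite $\hat P_{\rm out}P_{\rm in}=P_{\rm out}P_{\rm in}$ via Eq.~(\ref{eq:out_in}) and then use the auxiliary identity $\hat P_{\rm out}\mathcal{O} P_{\rm out}=\hat P_{\rm out}\mathcal{O}$, which follows by multiplying the right-invariance relation $P_{\rm out}\mathcal{O} P_{\rm out}=P_{\rm out}\mathcal{O}$ on the left by $\hat P_{\rm out}$ and applying Eq.~(\ref{eq:out}).

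The main obstacle is bookkeeping rather than any deep step: $V_{\rm in}$ is a space of column vectors that is left-invariant under $\mathcal{O}\in O$, while $V_{\rm out}$ is naturally a space of row vectors that is right-invariant, and the mixed-side definition $\hat V_{\rm in}=P_{\rm in}V_{\rm out}$ implicitly identifies rows and columns by transposition. Once the convention is fixed and invariance together with the explicit decompositions of $\hat P_{\rm in}$ and $\hat P_{\rm out}$ are in place, each identity in the lemma reduces to a one-line algebraic cancellation using the absorbing property of orthogonal projections.
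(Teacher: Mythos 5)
Your proposal is correct and follows essentially the same route as the paper: the orthogonal decompositions $\hat{P}_{\rm in}=P_{\rm in}-P_{{\rm in}\cap\overline{\rm out}}$, $\hat{P}_{\rm out}=P_{\rm out}-P_{{\rm out}\cap\overline{\rm in}}$, the annihilation relations such as $P_{\rm out}P_{{\rm in}\cap\overline{\rm out}}=0$, and the $O$-invariance identities $\mathcal{O}P_{\rm in}=P_{\rm in}\mathcal{O}P_{\rm in}$ and $P_{\rm out}\mathcal{O}=P_{\rm out}\mathcal{O}P_{\rm out}$, combined through short substitution chains. The only (welcome) additions are that you justify the decomposition of $\hat{P}_{\rm in}$ and $\hat{P}_{\rm out}$, which the paper asserts without proof, and you obtain the absorption identities directly from the range inclusions rather than from the decomposition; these are cosmetic differences, not a different method.
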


\begin{proof}
We have
\begin{eqnarray}
P_{\rm in}P_{{\rm in}\cap \overline{\rm out}} &=& P_{{\rm in}\cap \overline{\rm out}}P_{\rm in} = P_{{\rm in}\cap \overline{\rm out}}, \\
P_{\rm out}P_{{\rm in}\cap \overline{\rm out}} &=& P_{{\rm in}\cap \overline{\rm out}}P_{\rm out} = 0, \\
P_{\rm out}P_{{\rm out}\cap \overline{\rm in}} &=& P_{{\rm out}\cap \overline{\rm in}}P_{\rm out} = P_{{\rm out}\cap \overline{\rm in}}, \\
P_{\rm in}P_{{\rm out}\cap \overline{\rm in}} &=& P_{{\rm out}\cap \overline{\rm in}}P_{\rm in} = 0.
\end{eqnarray}
Therefore, Eq.~(\ref{eq:in}), Eq.~(\ref{eq:out}) and Eq.~(\ref{eq:out_in}) are valid.

Using $\mathcal{O} P_{\rm in} = P_{\rm in} \mathcal{O} P_{\rm in}$ and $P_{\rm out} \mathcal{O} = P_{\rm out} \mathcal{O} P_{\rm out}$, we have
\begin{eqnarray}
P_{\rm out} \mathcal{O} P_{\rm in} &=& P_{\rm out} \mathcal{O} P_{\rm out} P_{\rm in} = P_{\rm out} \mathcal{O} P_{\rm out} \hat{P}_{\rm in} \notag \\
&=& P_{\rm out} \mathcal{O} \hat{P}_{\rm in} = P_{\rm out} \mathcal{O} P_{\rm in} \hat{P}_{\rm in} \notag \\
&=& P_{\rm out} P_{\rm in} \mathcal{O} P_{\rm in} \hat{P}_{\rm in} = P_{\rm out} \hat{P}_{\rm in} \mathcal{O} P_{\rm in} \hat{P}_{\rm in} \notag \\
&=& P_{\rm out} \hat{P}_{\rm in} \mathcal{O} \hat{P}_{\rm in}.
\end{eqnarray}
Similarly,
\begin{eqnarray}
P_{\rm out} \mathcal{O} P_{\rm in} &=& P_{\rm out} P_{\rm in} \mathcal{O} P_{\rm in} = \hat{P}_{\rm out} P_{\rm in} \mathcal{O} P_{\rm in} \notag \\
&=& \hat{P}_{\rm out} \mathcal{O} P_{\rm in} = \hat{P}_{\rm out} P_{\rm out} \mathcal{O} P_{\rm in} \notag \\
&=& \hat{P}_{\rm out} P_{\rm out} \mathcal{O} P_{\rm out} P_{\rm in} = \hat{P}_{\rm out} P_{\rm out} \mathcal{O} \hat{P}_{\rm out} P_{\rm in} \notag \\
&=& \hat{P}_{\rm out} \mathcal{O} \hat{P}_{\rm out} P_{\rm in}.
\end{eqnarray}
Therefore, Eq.~(\ref{eq:out_O_in}) is valid.
\end{proof}


\begin{theorem}
Let $\rket{\sigma} \in V_{\rm in}$, $\rbra{H} \in V_{\rm out}$ and $\mathcal{O}_i \in O$, where $i = 1,2,\ldots,N$. Then,
\begin{eqnarray}
&& \rbra{H} \mathcal{O}_N \cdots \mathcal{O}_2 \mathcal{O}_1 \rket{\sigma} \notag \\
&=& \rbra{H} \hat{P}_{\rm in} \mathcal{O}_N \hat{P}_{\rm in} \cdots \hat{P}_{\rm in} \mathcal{O}_2 \hat{P}_{\rm in} \mathcal{O}_1 \hat{P}_{\rm in} \rket{\sigma} \notag \\
&=& \rbra{H} \hat{P}_{\rm out} \mathcal{O}_N \hat{P}_{\rm out} \cdots \hat{P}_{\rm out} \mathcal{O}_2 \hat{P}_{\rm out} \mathcal{O}_1 \hat{P}_{\rm out} \rket{\sigma}. ~~~
\end{eqnarray}
\label{theorem1}
\end{theorem}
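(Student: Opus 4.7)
The plan is to reduce Theorem~\ref{theorem1} to the single-operation identity
\begin{eqnarray*}
\rbra{H}\mathcal{O}\rket{v}
&=& \rbra{H}\hat{P}_{\rm in}\mathcal{O}\hat{P}_{\rm in}\rket{v} \\
&=& \rbra{H}\hat{P}_{\rm out}\mathcal{O}\hat{P}_{\rm out}\rket{v},
\end{eqnarray*}
valid whenever $\rbra{H}\in V_{\rm out}$, $\rket{v}\in V_{\rm in}$ and $\mathcal{O}\in O$. To derive it I would first use $\rbra{H}=\rbra{H}P_{\rm out}$ and $P_{\rm in}\rket{v}=\rket{v}$ to rewrite $\rbra{H}\mathcal{O}\rket{v}=\rbra{H}P_{\rm out}\mathcal{O}P_{\rm in}\rket{v}$, then invoke Eq.~(\ref{eq:out_O_in}) to replace the middle $P_{\rm out}\mathcal{O}P_{\rm in}$ by either $P_{\rm out}\hat{P}_{\rm in}\mathcal{O}\hat{P}_{\rm in}$ or $\hat{P}_{\rm out}\mathcal{O}\hat{P}_{\rm out}P_{\rm in}$, and finally reabsorb the outer $P_{\rm out}$ or $P_{\rm in}$ back into $\rbra{H}$ or $\rket{v}$.

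The full statement then follows by induction on $N$. For the $\hat{P}_{\rm in}$ form I would peel the rightmost operation, applying the identity above with $\rbra{H'}=\rbra{H}\mathcal{O}_N\cdots\mathcal{O}_2$ and $\rket{v}=\rket{\sigma}$ to turn $\mathcal{O}_1\rket{\sigma}$ into $\hat{P}_{\rm in}\mathcal{O}_1\hat{P}_{\rm in}\rket{\sigma}$, and then invoking the inductive hypothesis on the remaining $N-1$ operators with new initial vector $\hat{P}_{\rm in}\mathcal{O}_1\hat{P}_{\rm in}\rket{\sigma}\in V_{\rm in}$; adjacent $\hat{P}_{\rm in}$ projectors arising from consecutive inductive steps collapse via $\hat{P}_{\rm in}^2=\hat{P}_{\rm in}$. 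The $\hat{P}_{\rm out}$ form is the mirror image: peel $\mathcal{O}_N$ from the left, set $\rbra{H'}=\rbra{H}\hat{P}_{\rm out}\mathcal{O}_N\hat{P}_{\rm out}$, and then iterate on $\mathcal{O}_{N-1}\cdots\mathcal{O}_1$.

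The subtle point, and the one the proof really hinges on, is that the updated ``boundary'' vector stays in the right subspace at every step so that the single-operation identity can be reapplied. For this I would record two closure facts up front: (i)~$V_{\rm in}$ is invariant under left multiplication by any $\mathcal{O}\in O$ and $V_{\rm out}$ under right multiplication, because $O$ is closed under composition; and (ii)~$\hat{V}_{\rm in}\subseteq V_{\rm in}$ and $\hat{V}_{\rm out}\subseteq V_{\rm out}$, visible from $\hat{P}_{\rm in}=P_{\rm in}-P_{{\rm in}\cap\overline{\rm out}}$ and $\hat{P}_{\rm out}=P_{\rm out}-P_{{\rm out}\cap\overline{\rm in}}$. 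Together these give $\hat{P}_{\rm in}\mathcal{O}_1\hat{P}_{\rm in}\rket{\sigma}\in V_{\rm in}$ for the first induction and $\rbra{H}\hat{P}_{\rm out}\mathcal{O}_N\hat{P}_{\rm out}\in V_{\rm out}$ for the second, so the hypotheses of the single-operation identity continue to hold at each step. Once the closure properties are in hand, the remainder is a routine telescoping of $N$ identical identities.
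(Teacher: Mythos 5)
Your proposal is correct and takes essentially the same route as the paper: both arguments telescope Lemma~\ref{lemma1} (in particular Eq.~(\ref{eq:out_O_in})) across the operation chain, using the invariance of $V_{\rm in}$ and $V_{\rm out}$ under the composition-closed set $O$ together with $\hat{P}_{\rm in}=P_{\rm in}\hat{P}_{\rm in}$ and $\hat{P}_{\rm out}=\hat{P}_{\rm out}P_{\rm out}$ to absorb or collapse the projectors. The only difference is bookkeeping: you organize the sweep as an explicit induction on $N$ built on a single-operation scalar identity with closure of the boundary vectors, whereas the paper performs the same replacements at the operator level via auxiliary two-operation identities such as $P_{\rm out}\mathcal{O}'P_{\rm out}\mathcal{O}\hat{P}_{\rm in}=P_{\rm out}\mathcal{O}'\hat{P}_{\rm in}\mathcal{O}\hat{P}_{\rm in}$.
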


\begin{proof}
Using Lemma~\ref{lemma1}, we have
\begin{eqnarray}
P_{\rm out} \mathcal{O}' P_{\rm out} \mathcal{O} \hat{P}_{\rm in} &=& P_{\rm out} \mathcal{O}' P_{\rm out} \mathcal{O} P_{\rm in} \hat{P}_{\rm in} \notag \\
&=& P_{\rm out} \mathcal{O}' P_{\rm out} \hat{P}_{\rm in} \mathcal{O} \hat{P}_{\rm in} \hat{P}_{\rm in} \notag \\
&=& P_{\rm out} \mathcal{O}' \hat{P}_{\rm in} \mathcal{O} \hat{P}_{\rm in},
\end{eqnarray}
and
\begin{eqnarray}
\hat{P}_{\rm out} \mathcal{O} P_{\rm in} \mathcal{O} P_{\rm in} &=& \hat{P}_{\rm out} P_{\rm out} \mathcal{O} P_{\rm in} \mathcal{O}' P_{\rm in} \notag \\
&=& \hat{P}_{\rm out} \hat{P}_{\rm out} \mathcal{O} \hat{P}_{\rm out} P_{\rm in} \mathcal{O}' P_{\rm in} \notag \\
&=& \hat{P}_{\rm out} \mathcal{O} \hat{P}_{\rm out} \mathcal{O}' P_{\rm in}.
\end{eqnarray}

Using $\rket{\sigma} = P_{\rm in} \rket{\sigma}$ and $\rbra{H} = \rbra{H} P_{\rm out}$, we have
\begin{eqnarray}
&& \rbra{H} \mathcal{O}_N \cdots \mathcal{O}_2 \mathcal{O}_1 \rket{\sigma} \notag \\
&=& \rbra{H} P_{\rm out} \mathcal{O}_N \cdots \mathcal{O}_2 \mathcal{O}_1 P_{\rm in}  \rket{\sigma} \notag \\
&=& \rbra{H} P_{\rm out} \mathcal{O}_N P_{\rm out} \cdots P_{\rm out} \mathcal{O}_2 P_{\rm out} \mathcal{O}_1 P_{\rm in}  \rket{\sigma} \notag \\
&=& \rbra{H} P_{\rm out} \mathcal{O}_N P_{\rm out} \cdots P_{\rm out} \mathcal{O}_2 P_{\rm out} \mathcal{O}_1 \hat{P}_{\rm in}  \rket{\sigma} \notag \\
&=& \rbra{H} P_{\rm out} \mathcal{O}_N \hat{P}_{\rm in} \cdots \hat{P}_{\rm in} \mathcal{O}_2 \hat{P}_{\rm in} \mathcal{O}_1 \hat{P}_{\rm in}  \rket{\sigma} \notag \\
&=& \rbra{H} P_{\rm out} \hat{P}_{\rm in} \mathcal{O}_N \hat{P}_{\rm in} \cdots \hat{P}_{\rm in} \mathcal{O}_2 \hat{P}_{\rm in} \mathcal{O}_1 \hat{P}_{\rm in}  \rket{\sigma} \notag \\
&=& \rbra{H} \hat{P}_{\rm in} \mathcal{O}_N \hat{P}_{\rm in} \cdots \hat{P}_{\rm in} \mathcal{O}_2 \hat{P}_{\rm in} \mathcal{O}_1 \hat{P}_{\rm in}  \rket{\sigma}.
\end{eqnarray}
Similarly,
\begin{eqnarray}
&& \rbra{H} \mathcal{O}_N \cdots \mathcal{O}_2 \mathcal{O}_1 \rket{\sigma} \notag \\
&=& \rbra{H} P_{\rm out} \mathcal{O}_N \cdots \mathcal{O}_2 \mathcal{O}_1 P_{\rm in}  \rket{\sigma} \notag \\
&=& \rbra{H} P_{\rm out} \mathcal{O}_N P_{\rm in} \cdots P_{\rm in} \mathcal{O}_2 P_{\rm in} \mathcal{O}_1 P_{\rm in}  \rket{\sigma} \notag \\
&=& \rbra{H} \hat{P}_{\rm out} \mathcal{O}_N P_{\rm in} \cdots P_{\rm in} \mathcal{O}_2 P_{\rm in} \mathcal{O}_1 P_{\rm in}  \rket{\sigma} \notag \\
&=& \rbra{H} \hat{P}_{\rm out} \mathcal{O}_N \hat{P}_{\rm out} \cdots \hat{P}_{\rm out} \mathcal{O}_2 \hat{P}_{\rm out} \mathcal{O}_1 P_{\rm in}  \rket{\sigma} \notag \\
&=& \rbra{H} \hat{P}_{\rm out} \mathcal{O}_N \hat{P}_{\rm out} \cdots \hat{P}_{\rm out} \mathcal{O}_2 \hat{P}_{\rm out} \mathcal{O}_1 \hat{P}_{\rm out} P_{\rm in}  \rket{\sigma} \notag \\
&=& \rbra{H} \hat{P}_{\rm out} \mathcal{O}_N \hat{P}_{\rm out} \cdots \hat{P}_{\rm out} \mathcal{O}_2 \hat{P}_{\rm out} \mathcal{O}_1 \hat{P}_{\rm out}  \rket{\sigma}.
\end{eqnarray}
\end{proof}


\begin{theorem}
Let $d = \Tr (P)$, and each of $\{ P \rket{\rho_i} \}$ and $\{ \rbra{Q_k} P \}$ be a set of $d$ linearly-independent vectors. Then, $g = M_{\rm out} M_{\rm in} = M_{\rm out} P M_{\rm in}$ is invertible, $\widetilde{\mathcal{O}} = M_{\rm out} \mathcal{O} M_{\rm in} = M_{\rm out} P \mathcal{O} P M_{\rm in}$ and
\begin{eqnarray}
C &=& M_{\rm out} \mathcal{O}_N \cdots \mathcal{O}_2 \mathcal{O}_1 M_{\rm in} \notag \\
&=& M_{\rm out} P \mathcal{O}_N P \cdots \mathcal{O}_2 P \mathcal{O}_1 P M_{\rm in} \notag \\
&=& \widetilde{\mathcal{O}}_N g^{-1} \cdots \widetilde{\mathcal{O}}_2 g^{-1} \widetilde{\mathcal{O}}_1.
\label{eq:theorem2}
\end{eqnarray}
\label{theorem2}
\end{theorem}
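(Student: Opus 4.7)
The plan is to build the theorem on Theorem~\ref{theorem1} and Lemma~\ref{lemma1}, noting the identification $\hat{P}_{\rm out} = P$. First I would establish the three identities in Eq.~(\ref{eq:theorem2}) one at a time, and then prove invertibility of $g$ and derive the main factorized formula by inserting a suitable resolution of $P$.

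For the first identity, $g = M_{\rm out}PM_{\rm in}$, I would compute matrix-entry by matrix-entry: $(M_{\rm out}PM_{\rm in})_{k,i} = \rbra{Q_k}P\rket{\rho_i}$. Using $\rbra{Q_k}=\rbra{Q_k}P_{\rm out}$ and $\rket{\rho_i}=P_{\rm in}\rket{\rho_i}$ together with Eq.~(\ref{eq:out_in}) of Lemma~\ref{lemma1}, which gives $P_{\rm out}P_{\rm in}=P_{\rm out}\hat{P}_{\rm out}P_{\rm in}$, this reduces to $\rbra{Q_k}P_{\rm out}P_{\rm in}\rket{\rho_i}=\rbraket{Q_k}{\rho_i}=g_{k,i}$. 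For the second identity, $\widetilde{\mathcal{O}} = M_{\rm out}P\mathcal{O}PM_{\rm in}$, the same bookending plus Eq.~(\ref{eq:out_O_in}) (namely $P_{\rm out}\mathcal{O}P_{\rm in} = \hat{P}_{\rm out}\mathcal{O}\hat{P}_{\rm out}P_{\rm in}$, absorbed by $\hat{P}_{\rm out}P_{\rm out}=\hat{P}_{\rm out}$) does the job. The third identity, $C = M_{\rm out}P\mathcal{O}_NP\cdots P\mathcal{O}_1PM_{\rm in}$, is an immediate application of Theorem~\ref{theorem1} to each row--column pair $(\rbra{Q_k},\rket{\rho_i})$, once again using $\hat{P}_{\rm out}=P$.

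Next I would prove invertibility of $g$ and the pivotal identity $PM_{\rm in}\,g^{-1}\,M_{\rm out}P = P$. Since the $d$ vectors $\{P\rket{\rho_i}\}$ are linearly independent and live in the $d$-dimensional space $V=\mathrm{Im}\,P$, they form a basis of $V$; thus the columns of $PM_{\rm in}$ regarded as a map $\mathbb{R}^d\to V$ is an isomorphism, and analogously $M_{\rm out}P$ seen as a map $V\to\mathbb{R}^d$ is an isomorphism (from the linear independence of $\{\rbra{Q_k}P\}$ and $\dim V^{*}=d$). Their composition $g=M_{\rm out}P\cdot PM_{\rm in}=(M_{\rm out}P)(PM_{\rm in})$ is therefore a composition of two invertible $d$-dimensional maps, hence invertible. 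Then $g^{-1}(M_{\rm out}P)(PM_{\rm in})=I_d$ gives $(PM_{\rm in})g^{-1}(M_{\rm out}P)\cdot(PM_{\rm in})=PM_{\rm in}$, i.e., $PM_{\rm in}g^{-1}M_{\rm out}P$ acts as the identity on the image $V$ of $PM_{\rm in}$; and it annihilates $V^\perp$ because $M_{\rm out}P$ does. Thus $PM_{\rm in}g^{-1}M_{\rm out}P = P$.

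Finally, I would assemble the factorization by inserting a copy of $P = PM_{\rm in}g^{-1}M_{\rm out}P$ into each interior $P$ slot of $M_{\rm out}P\mathcal{O}_NP\cdots P\mathcal{O}_1PM_{\rm in}$, grouping consecutive factors as $(M_{\rm out}P\mathcal{O}_jPM_{\rm in})g^{-1}=\widetilde{\mathcal{O}}_j g^{-1}$, which telescopes into $\widetilde{\mathcal{O}}_N g^{-1}\cdots g^{-1}\widetilde{\mathcal{O}}_1$. The only delicate step is the second paragraph: matching the abstract count $d=\Tr(P)=\dim V$ with the assumed linear independence of the two sets, so that $PM_{\rm in}$ and $M_{\rm out}P$ are genuinely isomorphisms between $\mathbb{R}^d$ and $V$ (and hence their product is invertible and the resolution $PM_{\rm in}g^{-1}M_{\rm out}P=P$ holds). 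Everything else is bookkeeping with Lemma~\ref{lemma1} and Theorem~\ref{theorem1}.
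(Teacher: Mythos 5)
Your proposal is correct and follows essentially the same route as the paper: the entrywise identities via Lemma~\ref{lemma1} and Theorem~\ref{theorem1} with $P=\hat{P}_{\rm out}$, followed by inserting a resolution of $P$ to telescope the product. The only cosmetic difference is that the paper phrases the key step through Moore--Penrose pseudo-inverses, $P M_{\rm in}(P M_{\rm in})^{+}=P$ and $(M_{\rm out}P)^{+}M_{\rm out}P=P$ with $g^{-1}=(PM_{\rm in})^{+}(M_{\rm out}P)^{+}$, whereas you derive the equivalent identity $PM_{\rm in}\,g^{-1}\,M_{\rm out}P=P$ directly from the isomorphism argument; both amount to the same insertion.
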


\begin{proof}
We remark that $P = \hat{P}_{\rm out}$, and the theorem is also valid for $\hat{P}_{\rm in}$.

According to definitions of $M_{\rm out}$ and $M_{\rm in}$, we have $g_{k,i} = \rbraket{Q_k}{\rho_i}$. Because $\rket{\rho_i} \in V_{\rm in}$ and $\rbra{Q_k} \in V_{\rm out}$, we have $g_{k,i} = \rbra{Q_k} \hat{P}_{\rm in} \rket{\rho_i} = \rbra{Q_k} \hat{P}_{\rm out} \rket{\rho_i}$. Here, we have used Theorem~\ref{theorem1}. Therefore, $M_{\rm out} M_{\rm in} = M_{\rm out} P M_{\rm in}$.

Similarly, we have $\widetilde{\mathcal{O}}_{k,i} = \rbra{Q_k} \mathcal{O} \rket{\rho_i}  = \rbra{Q_k} \hat{P}_{\rm in} \mathcal{O} \hat{P}_{\rm in} \rket{\rho_i}  = \rbra{Q_k} \hat{P}_{\rm out} \mathcal{O} \hat{P}_{\rm out} \rket{\rho_i} $. Therefore, $M_{\rm out} \mathcal{O} M_{\rm in} = M_{\rm out} P \mathcal{O} P M_{\rm in}$

We also have
\begin{eqnarray}
&& C_{k,i} = \rbra{Q_k} \mathcal{O}_N \cdots \mathcal{O}_2 \mathcal{O}_1 \rket{\rho_i} \notag \\
&=& \rbra{Q_k} \hat{P}_{\rm in} \mathcal{O}_N \hat{P}_{\rm in} \cdots \hat{P}_{\rm in} \mathcal{O}_2 \hat{P}_{\rm in} \mathcal{O}_1 \hat{P}_{\rm in} \rket{\rho_i} \notag \\
&=& \rbra{Q_k} \hat{P}_{\rm out} \mathcal{O}_N \hat{P}_{\rm out} \cdots \hat{P}_{\rm out} \mathcal{O}_2 \hat{P}_{\rm out} \mathcal{O}_1 \hat{P}_{\rm out} \rket{\rho_i}.
~~~
\end{eqnarray}
Therefore, the first two lines of Eq.~(\ref{eq:theorem2}) are equal.

$P M_{\rm in}$ is a full rank $d_{\rm H}^2 \times \Tr (P)$ matrix, and $M_{\rm out} P$ is a full rank $\Tr (P) \times d_{\rm H}^2$ matrix. Thus, $(PM_{\rm in})^+ PM_{\rm in} = \openone$, $PM_{\rm in} (PM_{\rm in})^+ = P$, $M_{\rm out}P (M_{\rm out}P)^+ = \openone$ and $(M_{\rm out}P)^+ M_{\rm out}P = P$. Here, $A^+$ denotes the pseudo inverse of matrix $A$.

Using pseudo inverses, we have $g^{-1} = (PM_{\rm in})^+ (M_{\rm out}P)^+$, i.e.~$g = M_{\rm out}P PM_{\rm in}$ is invertible. Thus,
\begin{eqnarray}
&& \widetilde{\mathcal{O}}_N g^{-1} \cdots \widetilde{\mathcal{O}}_2 g^{-1} \widetilde{\mathcal{O}}_1 \notag \\
&=& M_{\rm out}P \mathcal{O}_N PM_{\rm in} (PM_{\rm in})^+ (M_{\rm out}P)^+ \cdots \notag \\
&&\times M_{\rm out}P \mathcal{O}_2 PM_{\rm in} (PM_{\rm in})^+ (M_{\rm out}P)^+ M_{\rm out}P \mathcal{O}_1 PM_{\rm in} \notag \\
&=& M_{\rm out} P \mathcal{O}_N P \cdots \mathcal{O}_2 P \mathcal{O}_1 P M_{\rm in}.
\end{eqnarray}
Therefore, the last two lines of Eq.~(\ref{eq:theorem2}) are equal.
\end{proof}

\section{Space dimension truncation}
\label{app:truncation}

We use $\norm{\cdot}$ to denote a vector norm satisfying $\abs{\rbraket{A}{B}} \leq \norm{\rbra{A}} \norm{\rket{B}}$ and the submultiplicative matrix norm induced by the vector norm, i.e.~$\norm{\mathcal{O}\rket{B}} \leq \norm{\mathcal{O}} \norm{\rket{B}}$ and $\norm{\mathcal{O}_1\mathcal{O}_2} \leq \norm{\mathcal{O}_1} \norm{\mathcal{O}_2}$.

Two examples of such norms. First, we can take $\norm{\rket{B}} = \sqrt{\rbraket{B}{B}} = \sqrt{\Tr (B^2)}$. Then, $\norm{\rket{B}} = \sqrt{\sum_i \sigma_i^2}$, where $\{ \sigma_i \}$ are singular values of $B$. Second, we can take $\norm{\rket{B}} = \norm{B}_1 = \sum_i \sigma_i \geq \sqrt{\sum_i \sigma_i^2}$, where $\norm{\cdot}_1$ denotes the trace norm. 

We use $\norm{\cdot}_{\rm max}$ to denote the max norm. 


\begin{lemma}
Let $N_Q \geq \norm{\rbra{Q_k}}$ for all $k$, and $N_\rho \geq \norm{\rket{\rho_i}}$ for all $i$. Then
\begin{eqnarray}
\norm{ M_{\rm out} \mathcal{O}_N \cdots \mathcal{O}_1 M_{\rm in} }_{\rm max} \leq N_Q N_\rho \prod_{j=1}^N \norm{\mathcal{O}_j}.
\end{eqnarray}
\label{lemma2}
\end{lemma}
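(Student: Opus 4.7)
The plan is to bound each entry of the matrix $M_{\rm out}\mathcal{O}_N\cdots\mathcal{O}_1 M_{\rm in}$ individually, since the max norm is just the largest entry in absolute value. By construction of $M_{\rm in}$ and $M_{\rm out}$, the $(k,i)$ entry is $\rbra{Q_k}\mathcal{O}_N\cdots\mathcal{O}_1\rket{\rho_i}$, so the claim reduces to a uniform bound on these scalar matrix elements.

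First, I would apply the assumed compatibility inequality $|\rbraket{A}{B}| \leq \norm{\rbra{A}}\,\norm{\rket{B}}$ with $\rbra{A} = \rbra{Q_k}$ and $\rket{B} = \mathcal{O}_N\cdots\mathcal{O}_1\rket{\rho_i}$ to obtain
\begin{eqnarray}
\bigl|\rbra{Q_k}\mathcal{O}_N\cdots\mathcal{O}_1\rket{\rho_i}\bigr|
\leq \norm{\rbra{Q_k}}\,\norm{\mathcal{O}_N\cdots\mathcal{O}_1\rket{\rho_i}}.
\end{eqnarray}
Next I would iterate the submultiplicative property $\norm{\mathcal{O}\rket{B}}\leq \norm{\mathcal{O}}\,\norm{\rket{B}}$, peeling off one operator at a time from the left, to get
\begin{eqnarray}
\norm{\mathcal{O}_N\cdots\mathcal{O}_1\rket{\rho_i}} \leq \Bigl(\prod_{j=1}^N \norm{\mathcal{O}_j}\Bigr)\,\norm{\rket{\rho_i}}.
\end{eqnarray}

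Combining these two inequalities and then using the hypotheses $\norm{\rbra{Q_k}}\leq N_Q$ and $\norm{\rket{\rho_i}}\leq N_\rho$ yields the entrywise bound $|\rbra{Q_k}\mathcal{O}_N\cdots\mathcal{O}_1\rket{\rho_i}| \leq N_Q N_\rho \prod_j \norm{\mathcal{O}_j}$. Taking the maximum over $k,i$ on the left gives exactly $\norm{M_{\rm out}\mathcal{O}_N\cdots\mathcal{O}_1 M_{\rm in}}_{\rm max}$, completing the proof.

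There is no real obstacle here: the argument is a direct chaining of the two defining properties of the chosen norm (compatibility with the bracket pairing and submultiplicativity under operator action), and the hypotheses were stated precisely so that this chaining closes. The only thing to be a little careful about is that the same norm must be used consistently for vectors and operators, which the setup of the lemma guarantees.
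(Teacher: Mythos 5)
Your proof is correct and is precisely the argument the paper has in mind: the paper simply remarks that Lemma~2 follows straightforwardly from the stated norm properties, and your entrywise bound via the pairing inequality $\abs{\rbraket{A}{B}} \leq \norm{\rbra{A}}\norm{\rket{B}}$ followed by iterated use of $\norm{\mathcal{O}\rket{B}} \leq \norm{\mathcal{O}}\norm{\rket{B}}$ is exactly that straightforward chaining, spelled out. No gaps.
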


According to the property of vector norm, the proof of Lemma~\ref{lemma2} is straightforward. 


\begin{theorem}
Let $N_Q \geq \norm{\rbra{Q_k}}$ for all $k$, and $N_\rho \geq \norm{\rket{\rho_i}}$ for all $i$. Then, for any sequence of operations,
\begin{eqnarray}
&& \Vert M_{\rm out} \overline{\mathcal{O}}_1 M_{\rm in} - M_{\rm out} \overline{\mathcal{O}}_{n+1} \overline{\mathcal{P}}_n M_{\rm in} \Vert_{\rm max} \notag \\
&\leq & N_Q N_\rho \prod_{j=n+1}^N \norm{\mathcal{O}_j} \notag \\
&& \times \left[ \prod_{j=1}^n \left(\norm{\mathcal{O}_j} + \norm{\delta_{\mathcal{O}_j}}\right) - \prod_{j=1}^n \norm{\mathcal{O}_j} \right],
\label{eq:theorem4}
\end{eqnarray}
where
\begin{eqnarray}
\delta_\mathcal{O} &=& \Pi_{\rm in}\mathcal{O}\Pi_{\rm in} - \mathcal{O}\Pi_{\rm in}, \\
\overline{\mathcal{O}}_m &=& \mathcal{O}_N \cdots \mathcal{O}_{m+1} \mathcal{O}_m, \\
\overline{\mathcal{P}}_m &=& \Pi_{\rm in} \mathcal{O}_m \Pi_{\rm in} \cdots \Pi_{\rm in} \mathcal{O}_2 \Pi_{\rm in} \mathcal{O}_1 \Pi_{\rm in}.
\end{eqnarray}
\label{theorem4}
\end{theorem}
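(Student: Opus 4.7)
The plan is to prove the bound by telescoping between $M_{\rm out}\overline{\mathcal{O}}_1 M_{\rm in}$ and $M_{\rm out}\overline{\mathcal{O}}_{n+1}\overline{\mathcal{P}}_n M_{\rm in}$ through the sequence of hybrid operators $H_k := \overline{\mathcal{O}}_{k+1}\overline{\mathcal{P}}_k$ for $k=0,1,\ldots,n$. Setting $\overline{\mathcal{P}}_0 := \Pi_{\rm in}$, the endpoints line up: $H_0 M_{\rm in} = \overline{\mathcal{O}}_1 \Pi_{\rm in} M_{\rm in} = \overline{\mathcal{O}}_1 M_{\rm in}$, because every column of $M_{\rm in}$ lies in $\mathrm{span}(\{\rket{\rho_i}\})$ and is therefore fixed by $\Pi_{\rm in}$, while $H_n = \overline{\mathcal{O}}_{n+1}\overline{\mathcal{P}}_n$ by definition.

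First, I would compute the jump $H_k - H_{k-1}$. The key peeling identity is $\overline{\mathcal{P}}_k = \Pi_{\rm in} \mathcal{O}_k \overline{\mathcal{P}}_{k-1}$, which follows since $\overline{\mathcal{P}}_{k-1}$ already begins with $\Pi_{\rm in}$ on the left and $\Pi_{\rm in}^2=\Pi_{\rm in}$. Combining this with $\overline{\mathcal{O}}_k = \overline{\mathcal{O}}_{k+1} \mathcal{O}_k$, the invariance $\Pi_{\rm in}\overline{\mathcal{P}}_{k-1} = \overline{\mathcal{P}}_{k-1}$, and the definition $\delta_{\mathcal{O}_k} = \Pi_{\rm in}\mathcal{O}_k\Pi_{\rm in} - \mathcal{O}_k\Pi_{\rm in}$, one obtains
\[
H_k - H_{k-1} = \overline{\mathcal{O}}_{k+1}(\Pi_{\rm in}\mathcal{O}_k - \mathcal{O}_k)\Pi_{\rm in}\overline{\mathcal{P}}_{k-1} = \overline{\mathcal{O}}_{k+1}\, \delta_{\mathcal{O}_k}\, \overline{\mathcal{P}}_{k-1}.
\]
Telescoping then gives $M_{\rm out}(\overline{\mathcal{O}}_1 - \overline{\mathcal{O}}_{n+1}\overline{\mathcal{P}}_n)M_{\rm in} = -\sum_{k=1}^n M_{\rm out}\, \overline{\mathcal{O}}_{k+1}\, \delta_{\mathcal{O}_k}\, \overline{\mathcal{P}}_{k-1}\, M_{\rm in}$.

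Next, I would bound each summand entry by entry. The $(k',i)$ entry of one summand is $\rbra{Q_{k'}}\overline{\mathcal{O}}_{k+1}\delta_{\mathcal{O}_k}\overline{\mathcal{P}}_{k-1}\rket{\rho_i}$, which is controlled by $N_Q\cdot\prod_{j=k+1}^N \norm{\mathcal{O}_j}\cdot\norm{\delta_{\mathcal{O}_k}}\cdot\norm{\overline{\mathcal{P}}_{k-1}\rket{\rho_i}}$. The crucial sub-lemma, proved by induction on $k$, is
\[
\norm{\overline{\mathcal{P}}_{k-1}\rket{\rho_i}} \le N_\rho \prod_{j=1}^{k-1}\bigl(\norm{\mathcal{O}_j} + \norm{\delta_{\mathcal{O}_j}}\bigr).
\]
Its proof uses the observation that for any $v$ in the range of $\Pi_{\rm in}$, one has $\Pi_{\rm in}\mathcal{O}_j\Pi_{\rm in} v = \mathcal{O}_j v + \delta_{\mathcal{O}_j} v$ (applying $\Pi_{\rm in}v = v$ to the definition of $\delta_{\mathcal{O}_j}$), so $\norm{\Pi_{\rm in}\mathcal{O}_j\Pi_{\rm in} v}\le (\norm{\mathcal{O}_j}+\norm{\delta_{\mathcal{O}_j}})\norm{v}$; and the result $\Pi_{\rm in}\mathcal{O}_j\Pi_{\rm in} v$ lies again in the range of $\Pi_{\rm in}$, so the induction closes. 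This is the step I expect to be the main obstacle: a naive estimate would propagate a factor of $\norm{\Pi_{\rm in}}$ through every layer of $\overline{\mathcal{P}}_{k-1}$, and since $\norm{\Pi_{\rm in}}$ in the chosen induced norm need not equal $1$ (it is absent from the theorem statement), one must carefully exploit that the chain acts on vectors already stabilised by $\Pi_{\rm in}$.

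Finally, assembling the elementwise bounds reduces the claim to the purely algebraic identity
\[
\sum_{k=1}^n b_k \prod_{j=1}^{k-1}(a_j + b_j)\prod_{j=k+1}^n a_j = \prod_{j=1}^n(a_j + b_j) - \prod_{j=1}^n a_j,
\]
with $a_j = \norm{\mathcal{O}_j}$ and $b_j = \norm{\delta_{\mathcal{O}_j}}$, which is itself an elementary telescoping sum. Factoring out the residual $\prod_{j=n+1}^N\norm{\mathcal{O}_j}$ together with $N_Q N_\rho$ produces exactly the stated bound, completing the proof.
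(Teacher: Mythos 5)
Your proof is correct and takes essentially the same route as the paper's: your jump identity $H_k - H_{k-1} = \overline{\mathcal{O}}_{k+1}\,\delta_{\mathcal{O}_k}\,\overline{\mathcal{P}}_{k-1}$ is exactly the decomposition on which the paper's induction over $n$ is built, with your explicit telescoping sum and the elementary identity $\sum_{k} b_k \prod_{j<k}(a_j+b_j)\prod_{j>k} a_j = \prod_j (a_j+b_j)-\prod_j a_j$ simply unrolling that induction. The only difference is bookkeeping: you bound $\norm{\overline{\mathcal{P}}_{k-1}\rket{\rho_i}}$ by a direct sub-lemma exploiting that the vectors remain in the range of $\Pi_{\rm in}$ (so no factor of $\norm{\Pi_{\rm in}}$ ever appears), whereas the paper obtains the same products $\prod_j(\norm{\mathcal{O}_j}+\norm{\delta_{\mathcal{O}_j}})$ from the induction hypothesis combined with the triangle inequality and Lemma~\ref{lemma2}, which likewise avoids invoking $\norm{\Pi_{\rm in}}$.
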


\begin{proof}
Inequality~(\ref{eq:theorem4}) holds for $n = 0$, because
\begin{eqnarray}
\Vert M_{\rm out} \overline{\mathcal{O}}_1 M_{\rm in} - M_{\rm out} \overline{\mathcal{O}}_1 \Pi_{\rm in} M_{\rm in} \Vert_{\rm max} = 0.
\end{eqnarray}

If inequality~(\ref{eq:theorem4}) holds for $n = m$, then it also holds for $n = m+1$. Now we assume that inequality~(\ref{eq:theorem4}) holds for $n = m$. Because
\begin{eqnarray}
&& \Vert M_{\rm out} \overline{\mathcal{O}}_1 M_{\rm in} \Vert_{\rm max} \leq N_Q N_\rho \prod_{j=1}^N \norm{\mathcal{O}_j},
\end{eqnarray}
we have
\begin{eqnarray}
&& \Vert M_{\rm out} \overline{\mathcal{O}}_{m+1} \overline{\mathcal{P}}_m M_{\rm in} \Vert_{\rm max} \notag \\
&\leq & N_Q N_\rho \prod_{j=m+1}^N \norm{\mathcal{O}_j} \prod_{j=1}^m \left(\norm{\mathcal{O}_j} + \norm{\delta_{\mathcal{O}_j}}\right),
\end{eqnarray}
Because
\begin{eqnarray}
&& M_{\rm out} \overline{\mathcal{O}}_{m+2} \overline{\mathcal{P}}_{m+1} M_{\rm in} \notag \\
&=& M_{\rm out} \overline{\mathcal{O}}_{m+2} \delta_{\mathcal{O}_{m+1}} \overline{\mathcal{P}}_m M_{\rm in} \notag \\
&& + M_{\rm out} \overline{\mathcal{O}}_{m+1} \overline{\mathcal{P}}_m M_{\rm in},
\end{eqnarray}
we have
\begin{eqnarray}
&& \Vert M_{\rm out} \overline{\mathcal{O}}_1 M_{\rm in} - M_{\rm out} \overline{\mathcal{O}}_{m+2} \overline{\mathcal{P}}_{m+1} M_{\rm in} \Vert_{\rm max} \notag \\
&\leq & N_Q N_\rho \prod_{j=m+2}^N \norm{\mathcal{O}_j} \norm{\delta_{\mathcal{O}_{m+1}}} \prod_{j=1}^m \left(\norm{\mathcal{O}_j} + \norm{\delta_{\mathcal{O}_j}}\right) \notag \\
&& + N_Q N_\rho  \prod_{j=m+1}^N \norm{\mathcal{O}_j} \notag \\
&& \times \left[ \prod_{j=1}^m \left(\norm{\mathcal{O}_j} + \norm{\delta_{\mathcal{O}_j}}\right) - \prod_{j=1}^m \norm{\mathcal{O}_j} \right] \notag \\
&=& N_Q N_\rho  \prod_{j=m+2}^N \norm{\mathcal{O}_j} \notag \\
&& \times \left[ \prod_{j=1}^{m+1} \left(\norm{\mathcal{O}_j} + \norm{\delta_{\mathcal{O}_j}}\right) - \prod_{j=1}^{m+1} \norm{\mathcal{O}_j} \right],
\end{eqnarray}
i.e.~inequality~(\ref{eq:theorem4}) holds for $n = m+1$.
\end{proof}

\section{Linear inversion method}
\label{app:LIM}


\begin{theorem}
$\{\rket{\rho^{\rm a}_i}\}$ are columns of $M^{\rm a}_{\rm in}$, and $\{\rbra{Q^{\rm a}_k}\}$ are rows of $M^{\rm a}_{\rm out}$. Let $N^{\rm a}_Q \geq \norm{\rbra{Q_k^{\rm a}}}$ for all $k$, and $N^{\rm a}_\rho \geq \norm{\rket{\rho^{\rm a}_i}}$ for all $i$. If $g$, $M_{\rm in}^{\rm a}$ and $M_{\rm out}^{\rm a}$ are inevitable, for any sequence of operations in $\{\mathcal{O}(\chi)\}$,
\begin{eqnarray}
&& \Vert \widetilde{\mathcal{O}}(\chi_N) g^{-1} \cdots \widetilde{\mathcal{O}}(\chi_2) g^{-1} \widetilde{\mathcal{O}}(\chi_1) \notag \\
&& - M_{\rm out}^{\rm a} \mathcal{O}^{\rm a}(\chi_N) \cdots \mathcal{O}^{\rm a}(\chi_2) \mathcal{O}^{\rm a}(\chi_1) M_{\rm in}^{\rm a} \Vert_{\rm max} \notag \\
&\leq & N^{\rm a}_Q N^{\rm a}_\rho
\left[ \left(1 + \norm{\delta_g}\right)^{N-1} \prod_{j=1}^N \left(\norm{\mathcal{O}^{\rm a}(\chi_j)} + \norm{\delta_{\chi_j}}\right) \right. \notag \\
&& - \left. \prod_{j=1}^N \norm{\mathcal{O}^{\rm a}(\chi_j)} \right],
\label{eq:theorem5}
\end{eqnarray}
where
\begin{eqnarray}
\delta_g &=& M_{\rm in}^{\rm a} g^{-1} M_{\rm out}^{\rm a} - \openone, \\
\delta_\chi &=& (M_{\rm out}^{\rm a})^{-1} \widetilde{\mathcal{O}}(\chi) (M_{\rm in}^{\rm a})^{-1} - \mathcal{O}^{\rm a}(\chi).
\end{eqnarray}
\label{theorem5}
\end{theorem}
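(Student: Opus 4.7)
The plan is to reduce the claim to a telescoping/induction argument of the type already used for Theorem~\ref{theorem4}, but accommodating two independent error sources, $\delta_g$ and $\delta_\chi$. First I would solve the defining equations for $\delta_g$ and $\delta_\chi$ to obtain $g^{-1} = (M_{\rm in}^{\rm a})^{-1}(\openone + \delta_g)(M_{\rm out}^{\rm a})^{-1}$ and $\widetilde{\mathcal{O}}(\chi) = M_{\rm out}^{\rm a}[\mathcal{O}^{\rm a}(\chi) + \delta_\chi]M_{\rm in}^{\rm a}$. Substituting these into $\widetilde{\mathcal{O}}(\chi_N) g^{-1} \cdots g^{-1} \widetilde{\mathcal{O}}(\chi_1)$, every internal $M_{\rm in}^{\rm a}(M_{\rm in}^{\rm a})^{-1}$ and $(M_{\rm out}^{\rm a})^{-1}M_{\rm out}^{\rm a}$ pair cancels, leaving $M_{\rm out}^{\rm a}\, R\, M_{\rm in}^{\rm a}$, where $R$ alternates between the $N$ factors $\mathcal{O}^{\rm a}(\chi_j)+\delta_{\chi_j}$ and the $N-1$ copies of $\openone+\delta_g$.

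Next I would factor $M_{\rm out}^{\rm a}$ and $M_{\rm in}^{\rm a}$ out of the difference in Eq.~(\ref{eq:theorem5}): the $(k,i)$ matrix element is $\rbra{Q_k^{\rm a}}[R - \prod_j \mathcal{O}^{\rm a}(\chi_j)]\rket{\rho_i^{\rm a}}$, so the same submultiplicativity argument as in Lemma~\ref{lemma2} bounds the max norm by $N^{\rm a}_Q N^{\rm a}_\rho$ times an induced operator norm of the inner difference. The task then reduces to bounding $\norm{R - \prod_j \mathcal{O}^{\rm a}(\chi_j)}$ with no reference to $M_{\rm in}^{\rm a}$, $M_{\rm out}^{\rm a}$, or any experimental data.

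Finally I would establish that operator-level bound by induction on $N$, mirroring the inductive step in the proof of Theorem~\ref{theorem4}. Writing $R_n$ for the corresponding $n$-fold product, I would split $R_{n+1} - \prod_{j=1}^{n+1}\mathcal{O}^{\rm a}(\chi_j)$ as a term proportional to $R_n$ (carrying the new error factors $\delta_g$ and $\delta_{\chi_{n+1}}$) plus a term proportional to $R_n - \prod_{j=1}^{n}\mathcal{O}^{\rm a}(\chi_j)$ (on which the inductive hypothesis acts). Using $\norm{\openone}=1$ for the induced norm and collecting the geometric-type contributions reproduces precisely the claimed expression $\prod_{j=1}^N (\norm{\mathcal{O}^{\rm a}(\chi_j)}+\norm{\delta_{\chi_j}})(1+\norm{\delta_g})^{N-1} - \prod_{j=1}^N \norm{\mathcal{O}^{\rm a}(\chi_j)}$. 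The main obstacle I anticipate is the combinatorial bookkeeping: two different error factors are interleaved with offset counts ($N$ versus $N-1$), so a sloppy telescoping easily miscounts a power of $(1+\norm{\delta_g})$; once the recursion above is set up carefully, the remainder is a routine application of the triangle inequality and submultiplicativity.
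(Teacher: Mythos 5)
Your proposal follows essentially the same route as the paper: you invert the definitions to get $g^{-1} = (M_{\rm in}^{\rm a})^{-1}(\openone+\delta_g)(M_{\rm out}^{\rm a})^{-1}$ and $\widetilde{\mathcal{O}}(\chi) = M_{\rm out}^{\rm a}[\mathcal{O}^{\rm a}(\chi)+\delta_\chi]M_{\rm in}^{\rm a}$, substitute so that the internal factors cancel, and then bound the sandwiched perturbed product element-wise by $N_Q^{\rm a}N_\rho^{\rm a}$ times a norm estimate, exactly as the paper does. Your explicit induction on $N$ simply spells out the final expansion/triangle-inequality step that the paper leaves implicit after the substitution, so the argument is correct and matches the paper's proof.
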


\begin{proof}
We have
\begin{eqnarray}
g^{-1} &=& (M_{\rm in}^{\rm a})^{-1} (\openone + \delta_g) (M_{\rm out}^{\rm a})^{-1}, \\
\widetilde{\mathcal{O}}(\chi) &=& M_{\rm out}^{\rm a} [\mathcal{O}^{\rm a}(\chi) + \delta_\chi] M_{\rm in}^{\rm a}.
\end{eqnarray}
Then,
\begin{eqnarray}
&& \widetilde{\mathcal{O}}(\chi_N) g^{-1} \cdots \widetilde{\mathcal{O}}(\chi_2) g^{-1} \widetilde{\mathcal{O}}(\chi_1) \notag \\
&=& M_{\rm out}^{\rm a} [\mathcal{O}^{\rm a}(\chi_N) + \delta_{\chi_N}] (\openone + \delta_g) \cdots \notag \\
&& \times [\mathcal{O}^{\rm a}(\chi_2) + \delta_{\chi_2}] (\openone + \delta_g) [\mathcal{O}^{\rm a}(\chi_1) + \delta_{\chi_1}] M_{\rm in}^{\rm a}. ~~~
\end{eqnarray}
Therefore, inequality~(\ref{eq:theorem5}) holds. 
\end{proof}

We now apply Theorem~\ref{theorem5} to the approximate model given by the approximate invariant subspace $\Pi_{\rm in}$. Let $\{\rket{l} ~\vert~ l=1,2,\ldots,d\}$ be an orthonormal basis of the subspace $\Pi_{\rm in}$, i.e.~$\rbraket{l_1}{l_2} = \delta_{l_1,l_2}$ and $\Pi_{\rm in} = \sum_{l=1}^d \rketbra{l}{l}$, and $\{\rket{l^{\rm a}} ~\vert~ l=1,2,\ldots,d\}$ be an orthonormal basis of the approximate-model space, i.e.~$\rbraket{l^{\rm a}_1}{l^{\rm a}_2} = \delta_{l_1,l_2}$ and $\openone = \sum_{l=1}^d \rketbra{l^{\rm a}}{l^{\rm a}}$. Then, $T \equiv \sum_{l=1}^d \rketbra{l^{\rm a}}{l}$ is the transformation from the actual space to the approximate-model space, and $T^+ = \sum_{l=1}^d \rketbra{l}{l^{\rm a}}$ is the inverse transformation. We have $TT^+ = \openone$ and $T^+T = \Pi_{\rm in}$. The approximate model is given by
\begin{eqnarray}
M_{\rm in}^{\rm a}  &=& T M_{\rm in}, \\
M_{\rm out}^{\rm a} &=& M_{\rm out} T^+, \\
\mathcal{O}^{\rm a}(\chi) &=& T \mathcal{O}(\chi) T^+.
\end{eqnarray}

We take the vector norm in the approximate-model space $\norm{\rbra{A^{\rm a}}} = \norm{\rbra{A^{\rm a}} T}$ and $\norm{\rket{B^{\rm a}}} = \norm{T^+ \rket{B^{\rm a}}}$. Then, $\abs{\rbraket{A^{\rm a}}{B^{\rm a}}} = \norm{\rbra{A^{\rm a}}} \norm{\rket{B^{\rm a}}}$ is satisfied. We have
\begin{eqnarray}
\norm{\mathcal{O}^{\rm a} \rket{B^{\rm a}}} = \norm{T^+\mathcal{O}^{\rm a}T T^+\rket{B^{\rm a}}}.
\end{eqnarray}
Because
\begin{eqnarray}
\norm{T^+\mathcal{O}^{\rm a}T T^+\rket{B^{\rm a}}} &\leq & \norm{T^+\mathcal{O}^{\rm a}T} \norm{T^+\rket{B^{\rm a}}} \notag \\
&=& \norm{T^+\mathcal{O}^{\rm a}T} \norm{\rket{B^{\rm a}}},
\end{eqnarray}
we have
\begin{eqnarray}
\norm{\mathcal{O}^{\rm a}} \leq \norm{T^+\mathcal{O}^{\rm a}T}.
\end{eqnarray}
Let $\mathcal{O}$ be an operation satisfying $\Pi_{\rm in}\mathcal{O}\Pi_{\rm in} = T^+\mathcal{O}^{\rm a}T$, we have
\begin{eqnarray}
&& \norm{T^+\mathcal{O}^{\rm a}T T^+\rket{B^{\rm a}}} \notag \\
&=& \norm{\Pi_{\rm in}\mathcal{O}\Pi_{\rm in} T^+\rket{B^{\rm a}}} \notag \\
&\leq & \norm{\mathcal{O}\Pi_{\rm in} T^+\rket{B^{\rm a}}} + \norm{\delta_\mathcal{O} T^+\rket{B^{\rm a}}} \notag \\
&=& \norm{\mathcal{O} T^+\rket{B^{\rm a}}} + \norm{\delta_\mathcal{O} T^+\rket{B^{\rm a}}} \notag \\
&\leq & (\norm{\mathcal{O}} + \norm{\delta_\mathcal{O}}) \norm{T^+\rket{B^{\rm a}}} \notag \\
&=& (\norm{\mathcal{O}} + \norm{\delta_\mathcal{O}}) \norm{\rket{B^{\rm a}}},
\end{eqnarray}
where
\begin{eqnarray}
\delta_\mathcal{O} &=& \Pi_{\rm in}\mathcal{O}\Pi_{\rm in} - \mathcal{O}\Pi_{\rm in}.
\end{eqnarray}
Therefore, $\norm{\mathcal{O}^{\rm a}} \leq \norm{\mathcal{O}} + \norm{\delta_\mathcal{O}}$. Because $T^+\mathcal{O}^{\rm a}(\chi)T = \Pi_{\rm in}\mathcal{O}(\chi)\Pi_{\rm in}$, we have $\norm{\mathcal{O}^{\rm a}(\chi)} \leq \norm{\mathcal{O}(\chi)} + \norm{\delta_{\mathcal{O}(\chi)}}$.

We have,
\begin{eqnarray}
\delta_g &=& T M_{\rm in} g^{-1} M_{\rm out} T^+ - \openone \notag \\
&=& T (M_{\rm in} g^{-1} M_{\rm out} \Pi_{\rm in} - \Pi_{\rm in}) T^+.
\end{eqnarray}
Because $g$ is invertible, $M_{\rm in}$ and $M_{\rm out}$ are full rank. Thus, $M_{\rm in}^+ M_{\rm in} = \openone$, $M_{\rm in} M_{\rm in}^+ = \Pi_{\rm in}$, $M_{\rm out} M_{\rm out}^+ = \openone$ and $M_{\rm out}^+ M_{\rm out} = \Pi_{\rm out}$. Then,
\begin{eqnarray}
&& M_{\rm in} g^{-1} M_{\rm out} \Pi_{\rm in} = M_{\rm in} g^{-1} M_{\rm out} M_{\rm in} M_{\rm in}^+ \notag \\
&=& M_{\rm in} g^{-1} g M_{\rm in}^+ = M_{\rm in} M_{\rm in}^+ = \Pi_{\rm in}.
\end{eqnarray}
Therefore, $\delta_g = 0$.

Because $g = M_{\rm out} \Pi_{\rm in} M_{\rm in}$ is invertible, $M_{\rm out} \Pi_{\rm in}$ is full rank. Thus, $M_{\rm out} \Pi_{\rm in} (M_{\rm out} \Pi_{\rm in})^+ = \openone$ and $(M_{\rm out} \Pi_{\rm in})^+ M_{\rm out} \Pi_{\rm in} = \Pi_{\rm in}$. Then, we have $(M_{\rm out} T^+)^{-1} = T (M_{\rm out} \Pi_{\rm in})^+$ and $(T M_{\rm in})^{-1} = M_{\rm in}^+ T^+$. Therefore,
\begin{eqnarray}
\delta_\chi &=& (M_{\rm out} T^+)^{-1} \widetilde{\mathcal{O}}(\chi) (T M_{\rm in})^{-1} - T \mathcal{O}(\chi) T^+ \notag \\
&=& T \left[ (M_{\rm out} \Pi_{\rm in})^+ \widetilde{\mathcal{O}}(\chi) M_{\rm in}^+ - \Pi_{\rm in} \mathcal{O}(\chi) \Pi_{\rm in} \right] T^+. ~~~~~~
\end{eqnarray}
We have
\begin{eqnarray}
&& (M_{\rm out} \Pi_{\rm in})^+ \widetilde{\mathcal{O}}(\chi) M_{\rm in}^+ \notag \\
&=& (M_{\rm out} \Pi_{\rm in})^+ M_{\rm out} \mathcal{O}(\chi) M_{\rm in} M_{\rm in}^+ \notag \\
&=& (M_{\rm out} \Pi_{\rm in})^+ M_{\rm out} \mathcal{O}(\chi) \Pi_{\rm in} \notag \\
&=& (M_{\rm out} \Pi_{\rm in})^+ M_{\rm out} \Pi_{\rm in} \mathcal{O}(\chi) \Pi_{\rm in} - (M_{\rm out} \Pi_{\rm in})^+ M_{\rm out} \delta_{\mathcal{O}(\chi)} \notag \\
&=& \Pi_{\rm in} \mathcal{O}(\chi) \Pi_{\rm in} - (M_{\rm out} \Pi_{\rm in})^+ M_{\rm out} \delta_{\mathcal{O}(\chi)}.
\end{eqnarray}
Then,
\begin{eqnarray}
\delta_\chi = - T (M_{\rm out} \Pi_{\rm in})^+ M_{\rm out} \delta_{\mathcal{O}(\chi)} T^+.
\end{eqnarray}

Let $G = \Pi_{\rm in} (M_{\rm out} \Pi_{\rm in})^+ M_{\rm out}$, we have
\begin{eqnarray}
\Pi_{\rm in}G &=& G, \\
G\Pi_{\rm out} &=& G, \\
G\Pi_{\rm in} &=& \Pi_{\rm in} (M_{\rm out} \Pi_{\rm in})^+ M_{\rm out} \Pi_{\rm in} = \Pi_{\rm in}, \\
\Pi_{\rm out}G &=& \Pi_{\rm out} \Pi_{\rm in} (M_{\rm out} \Pi_{\rm in})^+ M_{\rm out} \notag \\
&=& M_{\rm out}^+ M_{\rm out} \Pi_{\rm in} (M_{\rm out} \Pi_{\rm in})^+ M_{\rm out} \notag \\
&=& M_{\rm out}^+ M_{\rm out} = \Pi_{\rm out}.
\end{eqnarray}
Then,
\begin{eqnarray}
(\openone - \Pi_{\rm in} + \Pi_{\rm out})G = \Pi_{\rm out}.
\end{eqnarray}
We define $\delta_\Pi \equiv \Pi_{\rm in} - \Pi_{\rm out}$. If $\openone - \delta_\Pi$ is invertible, we have
\begin{eqnarray}
G = (\openone - \delta_\Pi)^{-1} \Pi_{\rm out}.
\end{eqnarray}
Then,
\begin{eqnarray}
T^+ \delta_\chi T &=& - \Pi_{\rm in} (M_{\rm out} \Pi_{\rm in})^+ M_{\rm out} \delta_{\mathcal{O}(\chi)} \Pi_{\rm in} \notag \\
&=& - G \delta_{\mathcal{O}(\chi)} \notag \\
&=& - (\openone - \delta_\Pi)^{-1} \Pi_{\rm out} \delta_{\mathcal{O}(\chi)} \notag \\
&=& (\openone - \delta_\Pi)^{-1} \delta_\Pi \delta_{\mathcal{O}(\chi)} \notag \\
&& - (\openone - \delta_\Pi)^{-1} \Pi_{\rm in} \delta_{\mathcal{O}(\chi)} \notag \\
&=& (\openone - \delta_\Pi)^{-1} \delta_\Pi \delta_{\mathcal{O}(\chi)}.
\end{eqnarray}
Therefore,
\begin{eqnarray}
\norm{\delta_\chi} &\leq & \norm{T^+ \delta_\chi T} = \Vert (\openone - \delta_\Pi)^{-1} \delta_\Pi \delta_{\mathcal{O}(\chi)} \Vert \notag \\
&\leq & (1-\norm{\delta_\Pi})^{-1} \norm{\delta_\Pi} \norm{\delta_{\mathcal{O}(\chi)}}.
\end{eqnarray}

Using inequality~(\ref{eq:theorem5}), we have
\begin{eqnarray}
&& \Vert \widetilde{\mathcal{O}}(\chi_N) g^{-1} \cdots \widetilde{\mathcal{O}}(\chi_2) g^{-1} \widetilde{\mathcal{O}}(\chi_1) \notag \\
&& - M_{\rm out}^{\rm a} \mathcal{O}^{\rm a}(\chi_N) \cdots \mathcal{O}^{\rm a}(\chi_2) \mathcal{O}^{\rm a}(\chi_1) M_{\rm in}^{\rm a} \Vert_{\rm max} \notag \\
&\leq & N'_Q N'_\rho \left\{ \prod_{j=1}^N \left[ \norm{\mathcal{O}(\chi)} + \norm{\delta_{\mathcal{O}(\chi)}} \right. \right. \notag \\
&& \left. + (1-\norm{\delta_\Pi})^{-1} \norm{\delta_\Pi} \norm{\delta_{\mathcal{O}(\chi)}} \right] \notag \\
&& - \left. \prod_{j=1}^N \left( \norm{\mathcal{O}(\chi)} + \norm{\delta_{\mathcal{O}(\chi)}} \right) \right\},
\end{eqnarray}
where
\begin{eqnarray}
N'_Q &=& \max \{ \norm{\rbra{Q_k^{\rm a}}} \} = \max \{ \norm{\rbra{Q_k^{\rm a}}T} \} \notag \\
&=& \max \{ \norm{\rbra{Q_k}\Pi_{\rm in}} \} \notag \\
&\leq & \max \{ \norm{\rbra{Q_k}\Pi_{\rm out}} + \norm{\rbra{Q_k}\delta_\Pi} \} \notag \\
&\leq & \max \{ \norm{\rbra{Q_k}} (1 + \norm{\delta_\Pi}) \}, \\
N'_\rho &=& \max \{ \norm{\rket{\rho_i^{\rm a}}} \} = \max \{ \norm{T^+\rket{\rho_i^{\rm a}}} \} \notag \\
&=& \max \{ \norm{\Pi_{\rm in} \rket{\rho_i}} \} = \max \{ \norm{\rket{\rho_i}} \}.
\end{eqnarray}

\section{Vector space dimensions}
\label{app:dimension}

If Hilbert spaces of the system and environment are respectively $d_{\rm S}$-dimensional and $d_{\rm E}$-dimensional, the Hilbert space of SE is $(d_{\rm S}d_{\rm E})$-dimensional. Then, a column vector $\rket{\rho}$ representing the state of SE is $(d_{\rm S}^2d_{\rm E}^2)$-dimensional. We remark that $d_{\rm E} = m$.

For the classical random variable noise, the state is in the form $\rho = \sum_\lambda p(\lambda) \rho_{\rm S}(\lambda) \otimes \ketbra{\lambda}{\lambda}_{\rm E}$, i.e.~the state of the environment (in the reduced density matrix form) only has diagonal elements. Therefore, we can use a $(d_{\rm S}^2d_{\rm E})$-dimensional vector to represent the state, i.e.~take $\rket{\rho} = \sum_\lambda p(\lambda) \rket{\rho_{\rm S}(\lambda)}_{\rm S} \otimes \rket{\lambda}_{\rm E}$, where $\{ \rket{\rho_{\rm S}(\lambda)}_{\rm S} \}$ are column vectors representing states of the system, and $\{ \rket{\lambda}_{\rm E} \}$ are column vectors representing states of the environment.

The state of the system can always be expressed in the form $\rho_{\rm S} = d_{\rm S}^{-1}\openone_{\rm S} + \rho_{\rm S}'$, where $\Tr (\rho_{\rm S}') = 0$. Then $\rket{\rho_{\rm S}}_{\rm S} = \rket{\openone}_{\rm S} + \rket{\rho_{\rm S}'}_{\rm S}$, where $\rket{\openone}_{\rm S}$ represents the maximally mixed state $d_{\rm S}^{-1}\openone_{\rm S}$, and $\rket{\openone}_{\rm S}$ and $\rket{\rho_{\rm S}'}_{\rm S}$ are orthogonal. We focus on the $d_{\rm E}$-dimensional subspace ${\rm span}(\{ \rket{\openone}_{\rm S} \otimes \rket{\lambda}_{\rm E} \})$. The orthogonal projection on this subspace is $P_{\openone} = \rketbra{\openone}{\openone}_{\rm S} \otimes \rketbra{\lambda}{\lambda}_{\rm E}$. Then, $P_{\openone} \rket{\rho} = \sum_\lambda p(\lambda) \rket{\openone}_{\rm S} \otimes \rket{\lambda}_{\rm E}$. If the distribution of $\lambda$ is stationary, i.e.~$\{ p(\lambda) \}$ are invariant under operations, we have $P_{\openone} \mathcal{O} \rket{\rho} = P_{\openone} \rket{\rho}$ for any operation $\mathcal{O}$ that does not change the distribution. Therefore, if the distribution of $\lambda$ is stationary, $P_{\openone} \rket{\rho}$ is the only non-trivial vector in the subspace $P_{\openone}$ that contributes to the state, and $\rket{\rho}$ is effectively $[(d_{\rm S}^2-1)d_{\rm E}+1]$-dimensional.

\section{Details of the numerical simulation}
\label{app:simulation}

\subsection{Actual quantum computer simulation}

To simulate the behaviour of the actual quantum computer, we use the Gaussian cubature approximation to match up to the $9^{\rm th}$-order moment, by taking $e^{-\lambda^2}$ instead of $\lambda$ as the random variable. Reducing the precision of the approximation and only matching up to the $7^{\rm th}$-order moment, we find that the difference is negligible.

\subsection{Linear inversion method simulation}

Trial states and observables are generated as follows: We selected a set of gate sequences, $(G_1,G_2,\ldots,G_N)$, where $G_j = H,S$; Each gate sequence corresponds to a state $G_N\cdots G_2G_1\ket{0}$ and an observable $(G_1G_2\cdots G_N)^\dag Z(G_1G_2\cdots G_N)$. These states and observables are realised using $H$ and $S$ gates accordingly. 

When $d = 4$, four gates sequences are used in the simulation, $({\rm null})$, $(H)$, $(H,S)$ and $(H,S,H)$. The four states are $\ket{0}$, $\ket{+} = H\ket{0}$, $\ket{y+} = SH\ket{0}$ and $\ket{y-} = HSH\ket{0}$; The four observables are $Z$, $X = H^\dag Z H$, $-Y = S^\dag H^\dag YHS$ and $Y = H^\dag S^\dag H^\dag YHSH$. 

When $d = 7$, $123$ gates sequences are used in the simulation, including all sequences with the gate number $N\leq 5$ and four sequences for each gate number $5<N\leq 20$. 

\begin{figure*}[tbp]
\centering
\includegraphics[width=1\linewidth]{\figpath /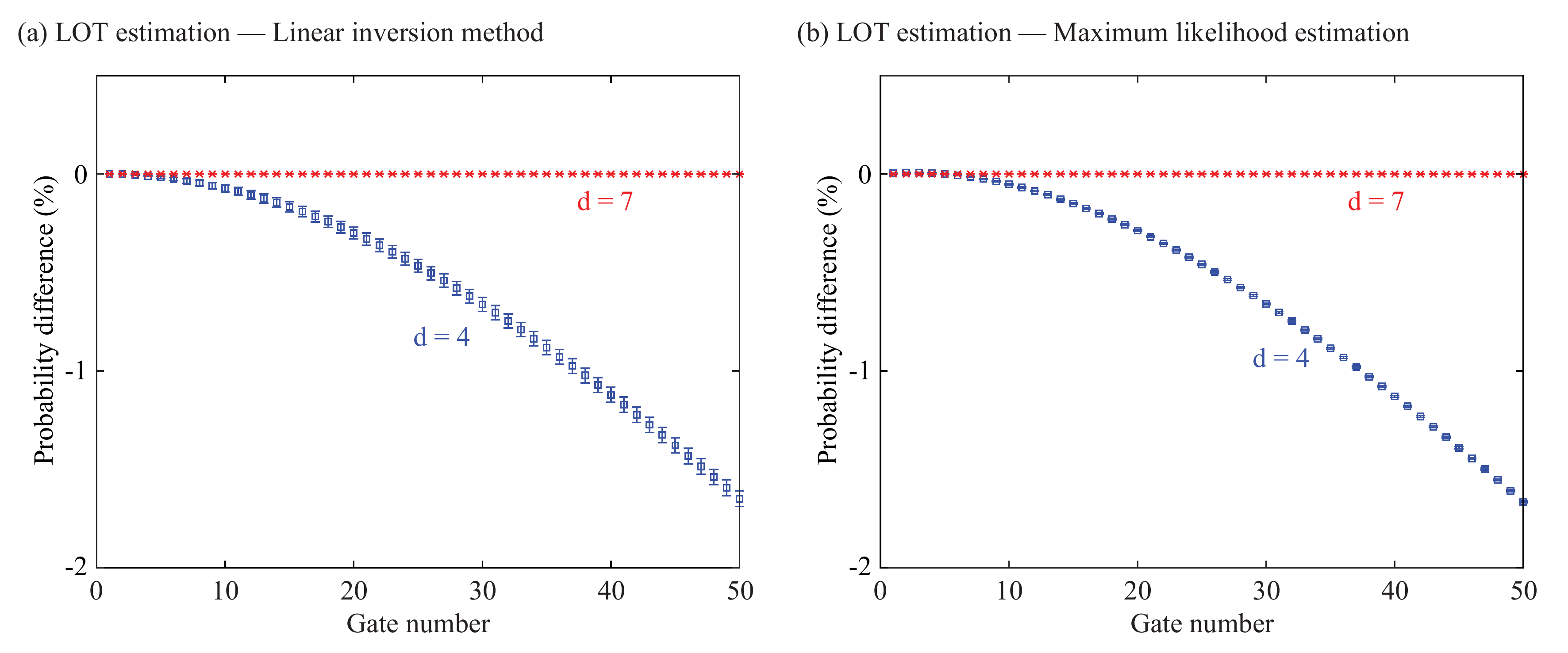}
\caption{
The difference between probabilities in the ideal state $\ket{0}$ after a sequence of randomly chosen gates. (a,b) The probability difference $F_{\rm tom} - F_{\rm act}$, where $F_{\rm act}$ is the probability obtained in the actual quantum computing, and $F_{\rm tom}$ is the probability estimated using linear operator tomography (LOT). The errorbar denotes one standard deviation. 
}
\label{fig:plot_app}
\end{figure*}

\subsection{Maximum likelihood estimation simulation}

Circuits for generating data $\{C_m\}$ used in MLE are the same as circuits used in LIM simulation. 

\subsection{Pauli transfer matrices}

The seven-dimensional Pauli transfer matrices of ideal gates are 
\begin{eqnarray}
M_{\mathcal{O}(H)}^{\rm ideal} &=& \left(\begin{array}{ccccccc}
1 & 0 & 0 & 0 & 0 & 0 & 0 \\
0 & 0 & 0 & 1 & 0 & 0 & 0 \\
0 & 0 & -1 & 0 & 0 & 0 & 0 \\
0 & 1 & 0 & 0 & 0 & 0 & 0 \\
0 & 0 & 0 & 0 & 0 & 0 & 1 \\
0 & 0 & 0 & 0 & 0 & -1 & 0 \\
0 & 0 & 0 & 0 & 1 & 0 & 0 \\
\end{array}\right)
\end{eqnarray}
and 
\begin{eqnarray}
M_{\mathcal{O}(S)}^{\rm ideal} &=& \left(\begin{array}{ccccccc}
1 & 0 & 0 & 0 & 0 & 0 & 0 \\
0 & 0 & 1 & 0 & 0 & 0 & 0 \\
0 & -1 & 0 & 0 & 0 & 0 & 0 \\
0 & 0 & 0 & 1 & 0 & 0 & 0 \\
0 & 0 & 0 & 0 & 0 & 1 & 0 \\
0 & 0 & 0 & 0 & -1 & 0 & 0 \\
0 & 0 & 0 & 0 & 0 & 0 & 1 \\
\end{array}\right)
\end{eqnarray}
In LIM, we take $\widehat{M}_{\rm in} = \widehat{M}_{\rm out}^{-1}g$, and $\widehat{M}_{\rm out}$ is chosen to minimise the difference between $M_{\mathcal{O}}$ and $M_{\mathcal{O}}^{\rm ideal}$.


\begin{thebibliography}{9}

\bibitem{Nielsen2010} M. A. Nielsen and I. L. Chuang,
\textit{Quantum Computation and Quantum Information},
Cambridge University Press, Cambridge, (2010).



\bibitem{Barends2014} R. Barends, J. Kelly, A. Megrant, A. Veitia, D. Sank, E. Jeffrey, T. C. White, J. Mutus, A. G. Fowler, B. Campbell, Y. Chen, Z. Chen, B. Chiaro, A. Dunsworth, C. Neill, P. O'Malley, P. Roushan, A. Vainsencher, J. Wenner, A. N. Korotkov, A. N. Cleland, and J. M. Martinis,
\textit{Superconducting quantum circuits at the surface code threshold for fault tolerance},
Nature \textbf{508}, 500 (2014).

\bibitem{Rong2015} X. Rong, J. Geng, F. Shi, Y. Liu, K. Xu, W. Ma, F. Kong, Z. Jiang, Y. Wu, and J. Du,
\textit{Experimental fault-tolerant universal quantum gates with solid-state spins under ambient conditions},
Nat. Commun. \textbf{6}, 8748 (2015).

\bibitem{Lucas} C. J. Ballance, T. P. Harty, N. M. Linke, M. A. Sepiol, and D. M. Lucas,
\textit{High-fidelity quantum logic gates using trapped-ion hyperfine qubits},
Phys. Rev. Lett. \textbf{117}, 060504 (2016).

\bibitem{Wineland} J. P. Gaebler, T. R. Tan, Y. Lin, Y. Wan, R. Bowler, A. C. Keith, S. Glancy, K. Coakley, E. Knill, D. Leibfried, and D. J. Wineland,
\textit{High-fidelity universal gate set for $^9$Be$^+$ ion qubits},
Phys. Rev. Lett. \textbf{117}, 060505 (2016).

\bibitem{BlumeKohout2017} R. Blume-Kohout, J. King Gamble, E. Nielsen, K. Rudinger, J. Mizrahi, K. Fortier, and P. Maunz,
\textit{Demonstration of qubit operations below a rigorous fault tolerance threshold with gate set tomography},
Nat. Commun. \textbf{8}, 14485 (2017).



\bibitem{Emerson2005} J. Emerson, R. Alicki, and K. \.{Z}yczkowski,
\textit{Scalable noise estimation with random unitary operators},
J. Opt. B Quantum Semiclassical Opt. \textbf{7}, S347 (2005).

\bibitem{Knill2008} E. Knill, D. Leibfried, R. Reichle, J. Britton, R. B. Blakestad, J. D. Jost, C. Langer, R. Ozeri, S. Seidelin, and D. J. Wineland,
\textit{Randomized benchmarking of quantum gates},
Phys. Rev. A \textbf{77}, 012307 (2008).

\bibitem{Magesan2011} E. Magesan, J. M. Gambetta, and J. Emerson,
\textit{Scalable and robust randomized benchmarking of quantum processes},
Phys. Rev. Lett. \textbf{106}, 180504 (2011).

\bibitem{Wallman2014} J. J. Wallman and S. T. Flammia,
\textit{Randomized benchmarking with confidence},
New J. Phys. \textbf{16}, 103032 (2014).

\bibitem{Fogarty2015} M. A. Fogarty, M. Veldhorst, R. Harper, C. H. Yang, S. D. Bartlett, S. T. Flammia, and A. S. Dzurak,
\textit{Nonexponential fidelity decay in randomized benchmarking with low-frequency noise},
Phys. Rev. A \textbf{92}, 022326 (2015).

\bibitem{Ball2016} H. Ball, T. M. Stace, S. T. Flammia, and M. J. Biercuk,
\textit{Effect of noise correlations on randomized benchmarking},
Phys. Rev. A \textbf{93}, 022303 (2016).

\bibitem{Mavadia2018} S. Mavadia, C. L. Edmunds, C. Hempel, H. Ball, F. Roy, T. M. Stace and M. J. Biercuk,
\textit{Experimental quantum verification in the presence of temporally correlated noise},
npj Quantum Informationvolume \textbf{4}, 7 (2018).



\bibitem{Poyatos1997} J. F. Poyatos, J. I. Cirac, P. Zoller,
\textit{Complete characterization of a quantum process: the two-bit quantum gate},
Phys. Rev. Lett. \textbf{78}, 390 (1997).

\bibitem{Chuang1997} I. L. Chuang and M. A. Nielsen,
\textit{Prescription for experimental determination of the dynamics of a quantum black box},
J. Mod. Opt. \textbf{44}, 2455 (1997).



\bibitem{Wang2011} D. S. Wang, A. G. Fowler, and L. C. L. Hollenberg,
\textit{Surface code quantum computing with error rates over 1\%},
Phys. Rev. A \textbf{83}, 020302(R) (2011).

\bibitem{Kueng2016} R. Kueng, D. M. Long, A. C. Doherty, and S. T. Flammia,
\textit{Comparing experiments to the fault-tolerance threshold},
Phys. Rev. Lett. \textbf{117}, 170502 (2016).



\bibitem{Aharonov1999} D. Aharonov and M. Ben-Or,
\textit{Fault-tolerant quantum computation with constant error rate},
arXiv:quant-ph/9906129



\bibitem{Huo2017} M.-X. Huo and Y. Li,
\textit{Learning time-dependent noise to reduce logical errors: real time error rate estimation in quantum error correction},
New J. Phys. \textbf{19}, 123032 (2017).



\bibitem{Chiaverini2004} J. Chiaverini, D. Leibfried, T. Schaetz, M. D. Barrett, R. B. Blakestad, J. Britton, W. M. Itano, J. D. Jost, E. Knill, C. Langer, R. Ozeri, and D. J. Wineland,
\textit{Realization of quantum error correction},
Nature \textbf{432}, 602 (2004).

\bibitem{Schindler2011} P. Schindler, J. T. Barreiro, T. Monz, V. Nebendahl, D. Nigg, M. Chwalla, M. Hennrich, and Rainer Blatt,
\textit{Experimental repetitive quantum error correction},
Science \textbf{332}, 1059 (2011).

\bibitem{Nigg2014} D. Nigg, M. M\"{u}ller, E. A. Martinez, P. Schindler, M. Hennrich, T. Monz, M. A. Martin-Delgado, and R. Blatt,
\textit{Quantum computations on a topologically encoded qubit},
Science \textbf{345}, 302 (2014).

\bibitem{Taminiau2014} T. H. Taminiau, J. Cramer, T. van der Sar1, V. V. Dobrovitski, and R. Hanson,
\textit{Universal control and error correction in multi-qubit spin registers in diamond},
Nature Nanotech. \textbf{9}, 171 (2014).

\bibitem{Corcoles2015} A. D. C\'{o}rcoles, E. Magesan, S. J. Srinivasan, A. W. Cross, M. Steffen, J. M. Gambetta, and J. M. Chow,
\textit{Demonstration of a quantum error detection code using a square lattice of four superconducting qubits},
Nat. Commun. \textbf{6}, 6979 (2015).

\bibitem{Riste2015} D. Rist\`{e}, S. Poletto, M.-Z. Huang, A. Bruno, V. Vesterinen, O.-P. Saira, and L. DiCarlo,
\textit{Detecting bit-flip errors in a logical qubit using stabilizer measurements},
Nat. Commun. \textbf{6}, 6983 (2015).

\bibitem{Muller2016} M. M\"{u}ller, A. Rivas, E. A. Mart\'{i}nez, D. Nigg, P. Schindler, T. Monz, R. Blatt, and M. A. Martin-Delgado,
\textit{Iterative phase optimization of elementary quantum error correcting codes},
Phys. Rev. X \textbf{6}, 031030 (2016).

\bibitem{Linke2017} N. M. Linke, M. Gutierrez, K. A. Landsman, C. Figgatt, S. Debnath, K. R. Brown, and C. Monroe,
\textit{Fault-tolerant quantum error detection},
Sci. Adv. \textbf{3}, 1701074 (2017).

\bibitem{Bermudez2017} A. Bermudez, X. Xu, R. Nigmatullin, J. O’Gorman, V. Negnevitsky, P. Schindler, T. Monz, U. G. Poschinger, C. Hempel, J. Home, F. Schmidt-Kaler, M. Biercuk, R. Blatt, S. Benjamin, and M. M\"{u}ller,
\textit{Assessing the progress of trapped-ion processors towards fault-tolerant quantum computation},
Phys. Rev. X \textbf{7}, 041061 (2017).



\bibitem{Fowler2012} A. G. Fowler, M. Mariantoni, J. M. Martinis, and A. N. Cleland,
\textit{Surface codes: Towards practical large-scale quantum computation},
Phys. Rev. A \textbf{86}, 032324 (2012).

\bibitem{Joe2017} J. O'Gorman and E. T. Campbell,
\textit{Quantum computation with realistic magic state factories},
Phys. Rev. A \textbf{95}, 032338(R) (2017).



\bibitem{Preskill2018} J. Preskill,
\textit{Quantum Computing in the NISQ era and beyond},
arXiv:1801.00862



\bibitem{Boixo2016} S. Boixo, S. V. Isakov, V. N. Smelyanskiy, R. Babbush, N. Ding, Z. Jiang, M. J. Bremner, J. M. Martinis, and H. Neven,
\textit{Characterizing quantum supremacy in near-term devices},
Nat. Phys. \textbf{14}, 595 (2018).

\bibitem{Neill2018} C. Neill, P. Roushan, K. Kechedzhi, S. Boixo, S. V. Isakov, V. Smelyanskiy, A. Megrant, B. Chiaro, A. Dunsworth, K. Arya, R. Barends, B. Burkett, Y. Chen, Z. Chen, A. Fowler, B. Foxen, M. Giustina, R. Graff, E. Jeffrey, T. Huang, J. Kelly, P. Klimov, E. Lucero, J. Mutus, M. Neeley, C. Quintana, D. Sank, A. Vainsencher, J. Wenner, T. C. White, H. Neven, J. M. Martinis,
\textit{A blueprint for demonstrating quantum supremacy with superconducting qubits},
Science \textbf{360}, 195 (2018).



\bibitem{Li2017} Y. Li and S. C. Benjamin,
\textit{Efficient variational quantum simulator incorporating active error minimization},
Phys. Rev. X \textbf{7}, 021050 (2017).

\bibitem{Temme2017} K. Temme, S. Bravyi, and J. M. Gambetta,
\textit{Error mitigation for short-depth quantum circuits},
Phys. Rev. Lett. \textbf{119}, 180509 (2017).

\bibitem{Endo2017} S. Endo, S. C. Benjamin, and Y. Li,
\textit{Practical quantum error mitigation for near-future applications},
Phys. Rev. X \textbf{8}, 031027 (2018).

\bibitem{Kandala2018} A. Kandala, K. Temme, A. D. C\'{o}rcoles, A. Mezzacapo, J. M. Chow, and J. M. Gambetta,
\textit{Extending the computational reach of a noisy superconducting quantum processor},
arXiv:1805.04492



\bibitem{Merkel2013} S. T. Merkel, J. M. Gambetta, J. A. Smolin, S. Poletto, A. D. C\'{o}rcoles, B. R. Johnson, C. A. Ryan, and M. Steffen,
\textit{Self-consistent quantum process tomography},
Phys. Rev. A \textbf{87}, 062119 (2013).

\bibitem{BlumeKohout2013} R. Blume-Kohout, J. K. Gamble, E. Nielsen, J. Mizrahi, J. D. Sterk, and P. Maunz,
\textit{Robust, self-consistent, closed-form tomography of quantum logic gates on a trapped ion qubit},
arXiv:1310.4492

\bibitem{Stark2014} C. Stark,
\textit{Self-consistent tomography of the state-measurement Gram matrix},
Phys. Rev. A \textbf{89}, 052109 (2014).

\bibitem{Greenbaum2015} D. Greenbaum,
\textit{Introduction to quantum gate set tomography},
arXiv:1509.02921

\bibitem{Sugiyama2018} T. Sugiyama, S. Imori, and F. Tanaka,
\textit{Reliable characterization of super-accurate quantum operations},
arXiv:1806.02696



\bibitem{Hooge1981} F. N. Hooge, T. G. M. Kleinpenning, and L. K. J. Vandamme,
\textit{Experimental studies on 1/f noise},
Rep. Prog. Phys. \textbf{44}, 479 (1981).

\bibitem{Paik2011} H. Paik, D. I. Schuster, L. S. Bishop, G. Kirchmair, G. Catelani, A. P. Sears, B. R. Johnson, M. J. Reagor, L. Frunzio, L. I. Glazman, S. M. Girvin, M. H. Devoret, and R. J. Schoelkopf,
\textit{Observation of high coherence in Josephson junction qubits measured in a three-dimensional circuit QED architecture},
Phys. Rev. Lett. \textbf{107}, 240501 (2011).

\bibitem{Sank2012} D. Sank, R. Barends, R. C. Bialczak, Y. Chen, J. Kelly, M. Lenander, E. Lucero, M. Mariantoni, A. Megrant, M. Neeley, P. J. J. O'Malley, A. Vainsencher, H. Wang, J. Wenner, T. C. White, T. Yamamoto, Yi Yin, A. N. Cleland, and John M. Martinis,
\textit{Flux noise probed with real time qubit tomography in a Josephson phase qubit},
Phys. Rev. Lett. \textbf{109}, 067001 (2012).



\bibitem{Lin2019} J. Lin, B. Buonacorsi, R. Laflamme, and J. J. Wallman,
\textit{On the freedom in representing quantum operations},
New J. Phys. \textbf{21}, 023006 (2019).



\bibitem{Rutman1978} J. Rutman,
\textit{Characterization of phase and frequency instabilities in precision frequency sources: fifteen years of progress},
Proc. IEEE \textbf{66}, 1048 (1978).

\bibitem{Wineland1998} D. J. Wineland, C. Monroe, W. M. Itano, D. Leibfried, B. E. King, and D. M. Meekhof,
\textit{Experimental issues in coherent quantum-state manipulation of trapped atomic ions},
J. Res. Natl. Inst. Stand. Technol. \textbf{103}, 259 (1998).

\bibitem{SchmidtKaler2003} F. Schmidt-Kaler, H. H\"{a}ffner, M. Riebe, S. Gulde, G. P. T. Lancaster, T. Deuschle, C. Becher, C. F. Roos, J. Eschner, and R. Blatt,
\textit{Realization of the Cirac-Zoller controlled-NOT quantum gate},
Nature \textbf{422}, 408 (2003).

\bibitem{Benhelm2008} J. Benhelm, G. Kirchmair, C. F. Roos, and R. Blatt,
\textit{Towards fault-tolerant quantum computing with trapped ions},
Nat. Phys. \textbf{4}, 463 (2008).

\bibitem{Ballance2016} C. J. Ballance, T. P. Harty, N. M. Linke, M. A. Sepiol, and D. M. Lucas,
\textit{High-fidelity quantum logic gates using trapped-ion hyperfine qubits},
Phys. Rev. Lett. \textbf{117}, 060504 (2016).



\bibitem{Rudinger2018} K. Rudinger, T. Proctor, D. Langharst, M. Sarovar, K. Young, and R. Blume-Kohout,
\textit{Probing context-dependent errors in quantum processors},
arXiv:1810.05651

\bibitem{Veitia2018} A. Veitia and S. van Enk,
\textit{Testing the context-independence of quantum gates},
arXiv:1810.05945



\bibitem{Pechukas1994} P. Pechukas,
\textit{Reduced dynamics need not be completely positive},
Phys. Rev. Lett. \textbf{73}, 1060 (1994).



\bibitem{Pollock2018} F. A. Pollock, C. Rodr\'{i}guez-Rosario, T. Frauenheim, M. Paternostro, and K. Modi,
\textit{Non-Markovian quantum processes: Complete framework and efficient characterization},
Phys. Rev. A \textbf{97}, 012127 (2018).



\bibitem{footnote} We use $\norm{\cdot}$ to denote a vector norm satisfying $\abs{\rbraket{A}{B}} \leq \norm{\rbra{A}} \norm{\rket{B}}$ and the submultiplicative matrix norm induced by the vector norm. We can take $\norm{\rket{\cdot}} = \norm{\cdot}_1$, where $\norm{\cdot}_1$ is the trace norm. Then, $\norm{\mathcal{O}} = 1$ if $\mathcal{O}$ is trace-preserving. $\norm{\cdot}_{\rm max}$ denotes max norm.



\bibitem{Helsen2019} J. Helsen, F. Battistel, and B. M. Terhal,
\textit{Spectral quantum tomography},
npj Quantum Inf. \textbf{5}, 74 (2019).



\bibitem{Gustavsson2016} S. Gustavsson, F. Yan, G. Catelani, J. Bylander, A. Kamal, J. Birenbaum, D. Hover, D. Rosenberg, G. Samach, A. P. Sears, S. J. Weber, J. L. Yoder, J. Clarke, A. J. Kerman, F. Yoshihara, Y. Nakamura, T. P. Orlando, and W. D. Oliver,
\textit{Suppressing relaxation in superconducting qubits by quasiparticle pumping},
Science \textbf{354}, 1573 (2016).



\bibitem{Miller1983} A. C. Miller, III, and T. R. Rice,
\textit{Discrete approximations of probability distributions},
Management Science \textbf{29}, 352 (1983).

\bibitem{DeVuysta2007} E. A. DeVuysta and P. V. Preckelb,
\textit{Gaussian cubature: A practitioner's guide},
Mathematical and Computer Modelling \textbf{45}, 787 (2007).

\end{thebibliography}
\end{document}